\documentclass[12pt]{amsart}

\usepackage{amsmath,amsfonts,amsbsy,amsthm}

\usepackage{dsfont}
\usepackage{mathtools}
\usepackage{mathrsfs}

\usepackage{enumerate}
\usepackage{enumitem}
\setlist{leftmargin=*}

\numberwithin{equation}{section}

\makeatletter
\newtheoremstyle{corsivo}
   {\medskipamount}{\medskipamount}%
   {\itshape}{}%
   {\bfseries}{}%
   { }
   {\thmname{#1}\thmnumber{\@ifnotempty{#1}{ }\@upn{#2}}%
    \thmnote{ {\bfseries\boldmath(#3)}}.}%
\makeatother

\theoremstyle{corsivo}
\newtheorem{theorem}{Theorem}[section]
\newtheorem{lemma}[theorem]{Lemma}
\newtheorem{corollary}[theorem]{Corollary}
\newtheorem{proposition}[theorem]{Proposition}

\makeatletter
\newtheoremstyle{dritto}
   {\medskipamount}{\medskipamount}%
   {\rmfamily}{}%
   {\bfseries}{}%
   { }
   {\thmname{#1}\thmnumber{\@ifnotempty{#1}{ }\@upn{#2}}%
    \thmnote{ {\bfseries\boldmath(#3)}}.}%
\makeatother

\theoremstyle{dritto}
\newtheorem{definition}[theorem]{Definition}
\newtheorem{remark}[theorem]{Remark}

\newtheorem{assumption}[theorem]{Assumption}


\newcommand{\sub}[1]{_{\mathrm{#1}}}
\newcommand{\su}[1]{^{\mathrm{#1}}}


\newcommand{\eps}{\varepsilon}
\newcommand{\epsi}{\varepsilon}


\newcommand{\Id}{\mathds{1}}  

\newcommand{\eu}{\mathrm{e}}
\newcommand{\iu}{\mathrm{i}}   
\newcommand{\di}{\mathrm{d}}



\newcommand{\N}{\mathbb{N}}
\newcommand{\Z}{\mathbb{Z}}
\newcommand{\R}{\mathbb{R}}
\newcommand{\C}{\mathbb{C}}


\newcommand{\Do}{\mathcal{D}}

\newcommand{\Hi}{\mathcal{H}}

\newcommand{\Hf}{\mathcal{H}\sub{f}}
\newcommand{\Df}{\mathcal{D}\sub{f}}
\newcommand{\U}{\mathcal{U}}
\newcommand{\X}{\mathcal{X}}
\newcommand{\PO}{\mathcal{P}}

\newcommand{\UZ}{\U\sub{BFZ}}   


\newcommand{\norm}[1]{\left\| #1 \right\|}

\newcommand{\set}[1]{ \left\{  #1 \right\}} 


\DeclareMathOperator{\Tr}{Tr}         

\DeclareMathOperator{\Ran}{Ran}




\newcommand{\eg}{{\sl e.\,g.\ }} 


\newcommand{\abs}[1]{\left\lvert#1\right\rvert}

\newcommand{\FC}{{\mathcal{C}}}

\newcommand{\SC}{\mathcal{S}}

\newcommand{\B}{\mathcal{B}}

\newcommand{\LI}{\mathscr{L}}
\usepackage{comment}


\newcommand{\virg}[1]{``#1''}

\renewcommand{\(}{\left(}
\renewcommand{\)}{\right)}


\newcommand{\E}{{\mathrm{e}}}
\newcommand{\I}{\mathrm{i}}
\newcommand{\Or}{{\mathcal{O}}}


\let\oldfootnote\footnote
\renewcommand{\footnote}[1]{\oldfootnote{\  #1}}

\setlength{\oddsidemargin}{5mm} \setlength{\evensidemargin}{5mm}
\setlength{\textwidth}{15cm}

\setlength{\parskip}{1mm}

\title[Purely linear response to space-adiabatic perturbations]{Purely linear response of the quantum Hall current to space-adiabatic perturbations}
\author[G.~Marcelli, D.~Monaco]{Giovanna Marcelli \and Domenico Monaco}
\date{\today, arXiv version 1}

\usepackage{hyperref}


\begin{document}

\begin{abstract}
Using recently developed tools from space-adiabatic perturbation theory, in particular the construction of a non-equilibrium almost stationary state, we give a new proof that the Kubo formula for the Hall conductivity remains valid beyond the linear response regime. In particular, we prove that, in quantum Hall systems and Chern insulators, the transverse response current is quantized up to any order in the strength of the inducing electric field. The latter is introduced as a perturbation to a periodic, spectrally gapped equilibrium Hamiltonian by means of a linear potential; existing proofs of the exactness of Kubo formula rely instead on a time-dependent magnetic potential. The result applies to both continuum and discrete crystalline systems modelling the quantum (anomalous) Hall effect.
\end{abstract}

\maketitle

\tableofcontents

\goodbreak


\section{Introduction and main results}

The mathematical understanding of transport properties of quantum system is a fundamental question in the mathematical physics of condensed matter, and still to date provides a stimulating challenge. The interest in this line of research has increased further after the discovery of ``exotic'' transport phenomena of topological origin, most notably the \emph{quantum Hall effect}, where a 2-dimensional electron gas subject to a perpendicular magnetic field displays a transverse current in response to an inducing in-plane electring field of strength $\eps$: at zero temperature, the conductivity for this transverse current can be computed by Kubo's formula at least in the linear response regime \cite{Kubo57}, and appears experimentally to be quantized (in appropriate physical units) to astounding precision \cite{vonKlitzing80}. More recently, a similar topological transport has been observed in \emph{Chern insulators}, where time-reversal symmetry is broken by a different mechanism than an external magnetic field \cite{Haldane88, Bestwick_et_al, cinesi1, cinesi2}: this phenomenon is then called \emph{quantum anomalous Hall effect}.

The formula by Kubo expresses the transverse Hall current~$j$ as
\[ j = \eps \, \sigma\sub{Hall} + \Or(\eps), \]
where $\sigma\sub{Hall} \in (e^2/h)\,\Z$ is expressed in terms of equilibrium quantities (see below), involving in particular the Fermi projection $\Pi_0$ onto occupied energy levels. Within the one-particle picture, the quantization of the Hall conductivity has been by now understood mathematically by means of its connection with the \emph{Chern number} of the Fermi projection from differential geometry, and with its noncommutative generalization, the \emph{Chern marker} (see \cite{Graf07} for a comprehensive review). Recent mathematical efforts have managed to extend results in this direction also to the setting of electrons interacting on a lattice \cite{HastingsMichalakis15, GiulianiMastropietroPorta17, BachmannDeRoeckFraas17, MonacoTeufel17}. We refer the reader to the recent review \cite{HenheikTeufel21} for further comments on the (mathematical) literature on the Kubo formula.

The topological nature of the Hall conductivity $\sigma\sub{Hall}$ is believed to be responsible for its stability and robustness, making it \emph{universal}, that is, independent of specific features of the model. Furthermore, its geometric origin is responsible also for the fact that the validity of the Kubo formula extends well beyond linear response: indeed, the conductivity associated to the transverse current of Hall systems is known to be equal to $\sigma\sub{Hall}$ up to \emph{arbitrarily high orders} in the strength $\eps$ of the perturbing electric field, that is,
\begin{equation} \label{KuboToAllOrders}
j = \eps \, \sigma\sub{Hall} + \Or(\eps^\infty).
\end{equation}
The existing literature on this property was initiated by the heuristic \emph{magnetic flux insertion argument} proposed by Laughlin in a cylindrical geometry \cite{Laughlin81}, which was later elaborated in a rigorous way for many-body electron gases in the continuum \cite{KleinSeiler} or discrete \cite{Bachmannetal} setting. These proofs focus on a related quantity, namely the Hall \emph{conductance}, defined as the (linear) response of the current intensity to the voltage drop: in two dimensions, this quantity agrees with the Hall \emph{conductivity} $\sigma\sub{Hall}$ defined above, see \cite{AvronSeilerSimon}. In the magnetic-flux-insertion argument, the inducing electric field is modelled by a slowly-varying time-dependent magnetic potential: this allows to follow time-adiabatically the insertion of this magnetic flux in the ground state. Klein and Seiler~\cite{KleinSeiler} then make use of the geometric interpretation of $\sigma\sub{Hall}$ to conclude the validity of (the Hall-conductance analogue of) \eqref{KuboToAllOrders}, at least up to averaging over time and over the inserted magnetic flux. Instead, Bachmann \textsl{et al.}~\cite{Bachmannetal} obtain an analogous statement (in the context of lattice spin systems with local interactions and observables) avoiding magnetic-flux averaging and the geometric argument, at the expense of exploiting the integrality of a certain Fredholm index related to the Hall conductance. Both approaches rely on the assumption that this magnetic flux insertion does not close the gap of the unperturbed Hamiltonian.

\medskip

In this paper, we manage to prove the above-mentioned remarkable property of the Hall conductivity (Theorem~\ref{thm:main result}) avoiding the magnetic flux insertion altogether. To model the equilibrium system, we employ a spectrally gapped (insulating) one-particle Hamiltonian~$H_0$; this could be a discrete, tight-binding Hamiltonian, or a continuum (magnetic) Schr\"odinger-type operator. Contrary to the above-mentioned references, the external electric field will then be introduced by the addition of a linear potential to the equilibrium Hamiltonian, closer to how experimental setups for the quantum Hall effect were originally performed. Our argument relies on two main tools:
\begin{enumerate}
 \item by treating the linear electric potential as a \emph{space-adiabatic perturbation}, we are able to construct a \emph{non-equilibrium almost stationary state} (NEASS), in the sense of \cite{Teufel20, MaPaTe}, which in the adiabatic regime well approximates the physical state of the system once the dynamical switching drives the Fermi projection out of equilibrium \cite{MarcelliTeufel}; 
 \item the connection of the conductivity associated to the current flowing in the NEASS with its topological value $\sigma\sub{Hall}$ is realized in our proof by a \emph{Chern--Simons-like formula} (Proposition~\ref{prop:CS formula}), similar to that used in \cite{KleinSeiler}.
\end{enumerate}

By definition, the NEASS is unitarily conjugated to the equilibrium Fermi projection (see property \ref{item:SA1} below): this structure is reminiscent of the ``magnetic gauge transformed projection'' of \cite{KleinSeiler}, as well as of the ``dressed ground state'' of \cite{Bachmannetal}, with the main difference that the unitary conjugation is defined here by employing \emph{space-adiabatic} rather than time-adiabatic perturbation theory. The NEASS was constructed in \cite{MaPaTe} up to first order in $\eps$ in the same context that we will employ; using arguments from \cite{Teufel20}, we extend this construction to arbitrarily high orders in $\eps$ (Theorem~\ref{thm:constr NEASS}), a result which is interesting in its own right.  

Since we deal with extended systems, a prominent role is played by the \emph{trace per unit volume} $\tau(\cdot)$, which is used to compute expectation values of extensive observables in (non-)equilibrium states: for example, the charge current which flows in the NEASS $\Pi^\eps$ equals
\[ \tau(\iu [H_0, X] \, \Pi^\eps), \]
and the quantized value of the Hall conductivity can be expressed (in appropriate physical units for which $e = \hbar = 1$) as
\[ \sigma\sub{Hall} = \iu \, \tau\left( \Pi_0 \big[ [\Pi_0, X], [\Pi_0, Y] \big] \right) \in \frac{1}{2\pi} \, \Z, \]
if the electric field is applied, say, in direction $y$, and the transverse current is measured along direction $x$. To ensure the well-posedness of all traces per unit volume that need to be considered, especially in continuum systems, we restrict ourselves to the setting of crystalline systems and \emph{periodic} operators (that is, operators which commute with translations by crystalline shifts in a Bravais lattice), and introduce certain operator algebras of such operators (see Section~\ref{sec: op alg}). The heart of the proof is however of ``algebraic'' nature, and therefore we believe that it may be generalized to apply to settings which also include ergodic disorder (at least under a spectral gap assumption), in which the relevant operators satisfy only a covariance property when shifted by lattice translations (see e.g.~\cite{BoucletGerminetKleinSchenker05} for a framework of this type).

Finally, let us comment on the applicability of our result to \emph{spin transport}. The discovery of topological insulators in the early 2000's stimulated the study of topological transport of spin, for example in the quantum spin Hall effect. When spin is a conserved quantity, $[H_0, S^z] = 0$, a spin current operator can be defined as $J^z = \iu [H_0, X]\,S^z$, and the response of this current to an external electric field can be also studied. As it is easily realized, this setting essentially amounts to two ``copies'' of a quantum Hall system (one corresponding to charge carriers with ``up'' spin, and one to those with ``down'' spin), and our result applies to this spin-filtered charge transport as well, leading to the quantization of the Hall conductivity in each spin channel separately. A much richer and mathematically more challenging situation would be to consider systems in which spin is \emph{not} conserved, for example due to the presence of Rashba spin-orbit coupling in the model. While formul\ae\ for the (appropriate generalization of) spin conductivity have been already investigated analytically \cite{MarcelliPanatiTauber19, MaPaTe} and numerically \cite{MonacoUlcakar20} within linear response, the existence of possible power-law correction to these formul\ae\ remains to be studied. We postpone this investigation to future work.

\medskip

The paper is structured as follows. Section~\ref{sec:model} details the class of models to which our result applies. Section~\ref{sec:NEASSAO} provides the construction of the NEASS to all orders in $\eps$, generalizing the results of \cite{MaPaTe} beyond the linear regime. Section~\ref{sec:main result} finally contains the proof of our main result stating the validity of the Kubo formula for the Hall conductivity to arbitrarily high orders of $\eps$. The Appendices contain, for the readers' convenience, some properties of the trace per unit volume and of the inverse Liouvillian of a gapped Hamiltonian, which are used throughout the paper.

\medskip

\noindent{\bf Acknowledgements.}
The authors are grateful to G.~Panati, M.~Porta and S.~Teufel for their careful reading and valuable comments on a first draft of the paper.
This work was supported by the National Group of Mathematical Physics (GNFM--INdAM) within the project \virg{Progetto Giovani GNFM 2020}. G.\,M. gratefully acknowledges the financial support from the European Research Council (ERC), under the European Union's Horizon 2020 research and innovation programme (ERC Starting Grant MaMBoQ, no. 802901).

\section{Model and mathematical framework} \label{sec:model}

\subsection{Crystalline structures and periodic operators}

The quantum systems we will be analyzing have a \emph{crystalline structure}, meaning that their configuration space~$\X$ is invariant under translations by vectors in a Bravais lattice $\Gamma$. We will address both continuum and discrete models on the same footing: in $d$-dimensions, by a continuum configuration space we mean $\X = \R^d$, while a discrete configuration space is a discrete set of points. In both cases, it can be assumed that the Bravais lattice $\Gamma$ is spanned over the integers by a basis $\set{a_1, \ldots, a_d} \subset \R^d$. 

The Hilbert space for a quantum particle with $N$ internal degrees of freedom (say, spin) will be
\[
\Hi := L^2(\X) \otimes \C^N \simeq L^2(\X ,\C^N ).
\]
A prominent feature of this Hilbert space is the possibility to define (self-adjoint) \emph{position operators}: 
\[
 (X_j \psi)(x ) := x_j \, \psi(x ), \quad 1\leq j \leq d.
\]
The above definition of course makes sense only on a suitable (maximal) domain $\Do(X_j) \subset \Hi$.

The crystalline structure of the configuration space is lifted to a symmetry of the one-particle Hilbert space, namely we assume that there is a unitary representation $T \colon \Gamma \to \mathcal{U}(\Hi)$, $\gamma \mapsto T_\gamma$, by \emph{translation operators}. Let us note that, in presence of uniform magnetic fields, these operators could be magnetic translations \cite{Zak64}, assuming a commensurability condition on the magnetic flux per unit cell and the quantum of magnetic flux. These considerations are relevant for quantum Hall systems, which are included in our framework under the above-mentioned commensurability hypothesis. 

An operator $A$ on $\Hi$ is called \emph{periodic} if $[A, T_\gamma] = 0$ for all $\gamma \in \Gamma$. As is well-known, the analysis of periodic operators is simplified by the use of the \emph{(magnetic)  
Bloch--Floquet--Zak representation} (see \eg \cite{FreundTeufel16} and references therein), which introduces the crystal momentum $k \in \R^d$ as a good quantum number. The {\emph{(magnetic) Bloch--Floquet--Zak transform}} is initially defined on compactly supported functions $\psi\in C_0 (\X,\C^N)\subset L^2(\X,\C^N)$ as
\begin{equation}
\label{eqn:defn UZ}
(\UZ \psi)(k,y) :=\eu^{-\iu k \cdot y} \sum_{\gamma \in \Gamma} \eu^{\iu k \cdot \gamma} (T_\gamma \psi)(y)\qquad \text{for all } k\in\R^d,\, y\in \X.
\end{equation}
For fixed $k\in \R^d$, the function $(\UZ \psi)(k,\cdot)$ is periodic with respect to the translations operators, hence it defines an element in the so-called fiber Hilbert space 
\[ 
\Hf := \left\{ \phi \in L^2\sub{loc}(\X,\C^N)\,|\,  T_\gamma \phi=\phi \mbox{ for all $\gamma\in\Gamma$} \right\} 
\]
which is equipped with the scalar product induced by the norm
\[ \|\phi\|_{\Hf}^2 := \int_{\FC_1}\di y\, |\phi(y)|^2, \]
where $\FC_1$ is a fundamental cell for $\Gamma$ (see \eqref{eqn:defn FCL}). The crystal momentum is effectively defined up to translations in the dual Bravais lattice $\Gamma^*$, consisting of those $\lambda \in \R^d$ such that $\lambda\cdot\gamma\in 2\pi\Z$: indeed,
\[
(\UZ \psi)(k+\gamma^*, y) = \left( \varrho_{\gamma^*} \, \UZ \psi\right)(k , y)   \text{ for all } \gamma^* \in \Gamma^*,
\]
where $(\varrho_{\gamma^*}\varphi)(y):=\E^{-\I \gamma^*\cdot y} \varphi(y)$, and $\varrho \colon \Gamma^* \to \mathcal{U}(\Hf)$, $\gamma^* \to \varrho_{\gamma^*}$, defines a unitary representation. The map defined by \eqref{eqn:defn UZ} extends then to a unitary operator $\UZ \colon \Hi \to \Hi_\varrho$, where $\Hi_\varrho \equiv L^2_{\varrho}(\R^d,\Hf)$ is the space of locally-$L^2$, $\Hf$-valued, $\varrho$-equivariant functions on $\R^d$. Denoting by $\mathbb{B}^d$ a fundamental cell for $\Gamma^*$, the inverse transformation 
$\UZ^{-1} \colon \Hi_\varrho\to\Hi$, is explicitly given by
$$
(\UZ^{-1} \varphi)(x) =\frac{1}{\abs{\mathbb{B}^d}}\int_{\mathbb{B}^d}\di k\,\E^{\iu k\cdot x}\varphi(k,x).
$$

This transform is useful in the analysis of periodic operators as they become \emph{covariant fibered operator} on $\Hi_\varrho$: upon the identification
\[
\Hi_\varrho \equiv L^2_{\varrho}(\R^d,\Hf) \subset L^2(\R^d,\Hf) \simeq \int_{\R^d}^{\oplus} \di k\,\Hf ,
\]
one has
\begin{equation} \label{BFZFiber}
\UZ \, A \, \UZ^{-1} = \int_{\R^d}^{\oplus}\di k\, A(k),
\end{equation}
where each $A(k)$ acts on $\Hf$ and satisfies the covariance property
\[ A(k+\gamma^*) = \varrho_{\gamma^*} \, A(k) \, \varrho_{\gamma^*}^{-1}, \quad \text{for all } k\in\R^d, \; \gamma^*\in\Gamma^*. \]

\subsection{Operator algebras of periodic operators} \label{sec: op alg}

Since the paper relies on the analysis of periodic operators, we will introduce in this Section the necessary operator algebras of operators which have a smooth fiber in the Bloch--Floquet--Zak representation, in an appropriate sense. In the following, $\Hi_1$ and $\Hi_2$ will denote Hilbert subspaces of $\Hf$ (possibly endowed with different norms than the subspace norm) which are left invariant by the action of all momentum-space translation operators $\rho_{\gamma^*}$, $\gamma^* \in \Gamma^*$. As we will specify in the next Sections, in our applications such Hilbert spaces will be either $\Hf$ itself, or the domain $\Df$ of the fiber unperturbed Hamiltonain (endowed with the graph norm of the latter).

\begin{definition}
Let $\mathcal{L}(\Hi_1,\Hi_2)$ denote the space of bounded linear operators from $\Hi_1$ to $\Hi_2$, and $\mathcal{L}(\Hi_1):=\mathcal{L}(\Hi_1,\Hi_1)$. We define
\[
\PO(\Hi_1,\Hi_2) := \left\{\text{ periodic operators $A$ with smooth fibration $\R^d\to \mathcal{L}(\Hi_1,\Hi_2)$, $k\mapsto A(k)$ } \right\}
\]
equipped with the norm 
\[ \norm{A}_{\PO(\Hi_1,\Hi_2)}:=\max_{k\in \mathbb{B}^d}\norm{A(k)}_{\mathcal{L}(\Hi_1,\Hi_2)}. \]
We also set $\PO(\Hi_1) :=\PO(\Hi_1,\Hi_1)$.
\end{definition}

Since the Fr\'echet derivative follows the usual rules of the differential calculus, we have that
\begin{enumerate}
 \item $\PO(\Hi_1,\Hi_2)$ is a linear space;
 \item $\PO(\Hi_1)$ is a normed algebra, as well as $\PO( \Hi_1,\Hi_2 )$ if%
 \footnote{With this inclusion we mean also that the identity $\Hi_2 \hookrightarrow \Hi_1$ is bounded as a map of normed spaces.} %
 $\Hi_2 \subset \Hi_1$;
 \item if $\Hi_2 \subset \Hi_1$, then for $A\in\PO(\Hi_2,\Hi_1)$ and $B\in\PO(\Hi_1,\Hi_2)$ we have
\[
AB\in\PO(\Hi_1)\text{ with }\norm{AB}_{\PO(\Hi_1)}\leq \norm{A}_{\PO(\Hi_2, \Hi_1)}\norm{B}_{\PO(\Hi_1, \Hi_2)}.
\]
\end{enumerate}

It is also useful to consider smooth functions in $\Hi_\varrho \equiv L^2_{\varrho}(\R^d,\Hf)$. As decay at infinity translates into regularity in $k$ via the Bloch--Floquet--Zak transform, for example compactly-supported functions of $x$ are mapped by $\UZ$ to smooth functions of $k$.

\begin{definition} \label{dfn:Cinfty}
We set
\[
C^\infty_\varrho(\R^d,\Hi_1):=\left\{ \varphi\in\Hi_\varrho : \; \varphi(k, \cdot) \in \Hi_1 \text{ for all $k \in \R^d$ and } \varphi\colon\R^d\to\Hi_1\text{ is smooth}\right\}.
\]
\end{definition}

This space of smooth functions, which is clearly dense in $L^2_\varrho(\R^d,\Hi_1)$, is particularly convenient to formulate the invariance of the operator algebras $\PO(\Hi_1,\Hi_2)$ under the derivations given by the commutation with position operators, as detailed in the following statement. Its proof can be found in \cite[Section 3]{MaPaTe}.
\begin{lemma}
\label{lem:derivation of P spaces}
Let $A\in\PO(\Hi_1,\Hi_2)$. Then  
\[
{[A,X_j]}:=\overline{[A,X_j]\Big|_ {\UZ^{-1}\,{C^\infty_\varrho(\R^d,\Hi_1)}}}
\]
is in $\PO(\Hi_1,\Hi_2)$, and
\[
{[A,X_j]}(k)\,\varphi(k)=-\iu\partial_{k_j}A(k)\,\varphi(k)\quad\text{  for all } \varphi \in {C^\infty_\varrho(\R^d,\Hi_1)}.
\]
\end{lemma}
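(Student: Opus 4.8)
The plan is to pass to the Bloch--Floquet--Zak representation, where the position operator is turned into a derivative in the crystal momentum $k$, and then to reduce the statement to an elementary computation with fibered operators followed by a density/closure argument. The whole point is that the delicate object $[A,X_j]$, which a priori is only a formal commutator of a bounded operator with the unbounded $X_j$, becomes on the relevant dense domain nothing but (minus $\iu$ times) the $k$-derivative of the smooth fibration of $A$.

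First I would establish that, on the dense domain $\UZ^{-1}C^\infty_\varrho(\R^d,\Hi_1)$, the position operator is represented by
\[
\UZ\,X_j\,\UZ^{-1} = \iu\,\partial_{k_j}.
\]
This follows directly from the defining formula \eqref{eqn:defn UZ}: writing out $\UZ(X_j\psi)$ and using that $T_\gamma$ shifts the argument by $\gamma$ (so that the eigenvalue $x_j$ splits into a part proportional to $y_j$ and a part proportional to $\gamma_j$), the identity $\gamma_j\,\eu^{\iu k\cdot\gamma} = -\iu\,\partial_{k_j}\eu^{\iu k\cdot\gamma}$ reproduces exactly $\iu\partial_{k_j}$ acting on the $\varrho$-equivariant function $\UZ\psi$, the $y_j$-multiplication terms cancelling against the derivative of the Zak phase $\eu^{-\iu k\cdot y}$. (In the presence of magnetic translations the same conclusion holds, the magnetic phases being absorbed in the computation.) Given this, for $A\in\PO(\Hi_1,\Hi_2)$ fibered as in \eqref{BFZFiber} and $\varphi\in C^\infty_\varrho(\R^d,\Hi_1)$, the Leibniz rule yields fiberwise
\[
\UZ\,[A,X_j]\,\UZ^{-1}\,\varphi(k) = \big[A(k),\,\iu\partial_{k_j}\big]\,\varphi(k) = -\iu\,\big(\partial_{k_j}A(k)\big)\,\varphi(k),
\]
which is precisely the claimed fiber formula.

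It then remains to check that the right-hand side really defines an element of $\PO(\Hi_1,\Hi_2)$. Since $A\in\PO(\Hi_1,\Hi_2)$ means that $k\mapsto A(k)$ is smooth as a map into $\mathcal{L}(\Hi_1,\Hi_2)$, its derivative $\partial_{k_j}A$ is again a smooth $\mathcal{L}(\Hi_1,\Hi_2)$-valued map; in particular each $\partial_{k_j}A(k)$ is bounded $\Hi_1\to\Hi_2$, with $\norm{-\iu\partial_{k_j}A}_{\PO(\Hi_1,\Hi_2)} = \max_{k}\norm{\partial_{k_j}A(k)}_{\mathcal{L}(\Hi_1,\Hi_2)}$. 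Differentiating the covariance relation $A(k+\gamma^*)=\varrho_{\gamma^*}A(k)\varrho_{\gamma^*}^{-1}$ in $k$ (the unitaries $\varrho_{\gamma^*}$ being $k$-independent) shows that $\partial_{k_j}A$ obeys the same covariance, so $-\iu\,\partial_{k_j}A$ is a genuine periodic operator with smooth fibration. Being bounded and agreeing with $[A,X_j]$ on the dense domain $C^\infty_\varrho(\R^d,\Hi_1)$, it must coincide with the closure of $[A,X_j]$ restricted to $\UZ^{-1}C^\infty_\varrho(\R^d,\Hi_1)$, giving $[A,X_j]\in\PO(\Hi_1,\Hi_2)$.

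The one genuinely delicate point, and the step I would treat most carefully, is the domain bookkeeping for the unbounded operator $X_j$: one must verify that $\UZ^{-1}C^\infty_\varrho(\R^d,\Hi_1)$ lies in the domain of $X_j$ (equivalently that $C^\infty_\varrho$ is in the domain of $\iu\partial_{k_j}$ in the appropriate equivariant $L^2$ sense) and that both $AX_j$ and $X_jA$ make sense there, so that the commutator is well-defined on this domain \emph{before} one identifies it with the bounded fibered operator $-\iu\partial_{k_j}A$ and passes to the closure. Once this is in place, everything else is routine calculus of Fréchet-differentiable operator-valued maps.
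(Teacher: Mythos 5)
Your proposal is correct and follows essentially the same route as the proof the paper invokes from \cite[Section 3]{MaPaTe}: pass to the Bloch--Floquet--Zak representation where $X_j$ acts as $\iu\partial_{k_j}$ on the equivariant smooth functions, compute the commutator fiberwise via the Leibniz rule to obtain $-\iu\,\partial_{k_j}A(k)$, and conclude that this smooth, covariant, uniformly bounded fibration defines an element of $\PO(\Hi_1,\Hi_2)$ coinciding with the closure of the densely defined commutator. The domain checks you flag are indeed routine given the definitions: $C^\infty_\varrho(\R^d,\Hi_1)$ is stable under $\partial_{k_j}$ (equivariance is preserved by differentiation since the $\varrho_{\gamma^*}$ are $k$-independent), and $A$ maps it into $C^\infty_\varrho(\R^d,\Hi_2)\subset\Do(\iu\partial_{k_j})$, so both terms of the commutator make sense on this core.
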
 

We conclude this Section by recalling that the space $\B_\infty^\tau$ of bounded periodic operators is endowed with a trace-like functional, called the \emph{trace per unit volume}, defined equivalently as
\[ \tau(A) := \frac{1}{|\FC_1|} \, \Tr_{\Hi}\left(\chi_{\FC_1} \, A \, \chi_{\FC_1}\right) \quad \text{or} \quad \tau(A) := \frac{1}{(2\pi)^d} \int_{\mathbb{B}^d} \di k \, \Tr_{\Hf}(A(k)) \,, \]
whenever the right-hand sides make sense (see Proposition~\ref{prop:tau}). Here, $\chi_{\FC_1}$ is the multiplication operator times the characteristic function of the fundamental cell $\FC_1 \subset \X$. Periodic operators of \emph{trace-per-unit-volume class} define the space $\B_1^\tau$. As we will see shortly, the trace per unit volume is used to compute expectation values of extensive observables in the crystalline, periodic setting which we also employ. We refer the reader to Appendix~\ref{sec:TPUV} and to \cite{MaPaTe} for a list of the relevant properties of the trace per unit volume that will be repeatedly used in the paper.

\subsection{The model}

As stated in the Introduction, our goal is to investigate the response of a crystalline system to the application of an external constant electric field of small intensity. Consequently, a prominent role is played by the Hamiltonian $H_0$ of the system at equilibrium, before the electric field is applied and the response current is probed. Our assumptions on this unpertubed model, which coincide with those adopted in~\cite[Section 3]{MaPaTe}, are stated below.

\begin{assumption} \label{assum:1} We assume the following.  
\begin{enumerate}[label=$(\mathrm{H}_{\arabic*})$,ref=$(\mathrm{H}_{\arabic*})$]
\item \label{item:smooth_H}
The Hamiltonian $H_0$ of the unperturbed system is a self-adjoint periodic operator on $\Hi$, bounded from below. Moreover, its fibers $H_0(k)$, defined in the Bloch--Floquet--Zak representation via~\eqref{BFZFiber}, are self-adjoint operators with a common dense domain $\Df\subset\Hf$. Finally, we assume that $H_0 \in \PO(\Df,\Hf)$, where hereinafter $\Df$ is understood to be equipped with the graph norm $\norm{\,\cdot\,}_{\Df}$ of the operator $H_0(0)$.
\item \label{item:gap} 
We assume the Fermi energy $\mu\in\R$ to lie in a spectral gap of $H_0$. We denote by $\Pi_0 = \chi_{(-\infty, \mu)}(H_0)$ the corresponding spectral projector (Fermi projector). Finally, we assume that\footnote{This assumption is equivalent to require that the fibration $k\mapsto\Pi_0(k)$ takes values in the finite-rank projections on $\Hf$. Indeed, in view of the fact that $\Pi_0$ is an orthogonal projection and the smoothness assumption \ref{item:smooth_H}, it follows that $\mathrm{Rank}(\Pi_0(k))=\Tr(\Pi_0(k)) =m\in\N \cup \set{+ \infty}$ 
 is independent of $k$. Therefore, by virtue of Proposition\ref{prop:tau}\ref{item:per+traceclassfibr} $\Pi_0\in\B_1^\tau$ is equivalent to $m<+ \infty$.} $\Pi_0\in\B_1^\tau$. 
\end{enumerate}
\end{assumption}

The above assumptions are satisfied by a large class of physically relevant models, including tight-binding Hamiltonian of common use in condensed matter physics to model discrete systems, as well as Bloch--Landau operators (under mild regularity assumptions on the electro-magnetic potentials -- see e.g.~\cite[Sec.~3]{MonacoPanatiPisanteTeufel18}) used as continuum models for crystalline systems.

We list below a number of relevant properties which can be deduced from the above Assumption, in combination with Lemma~\ref{lem:derivation of P spaces}, and which will be used repeatedly throughout the paper. As before, we refer the reader to \cite[Sec.~3]{MaPaTe} for a proof.

\begin{proposition}
\label{prop:resolvent-Pi0 der Pi0 and H0}
Under Assumption~\ref{assum:1} we have that
\begin{enumerate}[label=(\roman*), ref=(\roman*)]
\item \label{item:resolvent-Pi0} for every $z\in \rho(H_0)$ the resolvent operator $(H_0 - z \Id)^{-1}$ lies in $\PO( \Hf,\Df )$, and consequently $\Pi_0$ is in $\PO( \Hf,\Df )$ as well;
\item \label{item:der Pi0 and H0} all iterated commutators of $H_0$ with position operators lie in $\PO( \Df,\Hf )$, while all iterated commutators of $\Pi_0$ with position operators lie in $\PO(\Hf,\Df )$. 
\end{enumerate}
\end{proposition}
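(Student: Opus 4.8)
The plan is to reduce both claims to two ingredients already available: the fact that $H_0 \in \PO(\Df, \Hf)$ from Assumption~\ref{item:smooth_H}, together with the uniform spectral gap of~\ref{item:gap}, and the derivation property of Lemma~\ref{lem:derivation of P spaces}. For part~\ref{item:resolvent-Pi0}, fix $z\in\rho(H_0)$. Since $H_0$ is periodic one has $\sigma(H_0)=\overline{\bigcup_{k\in\mathbb{B}^d}\sigma(H_0(k))}$, so $z\in\rho(H_0)$ yields $z\in\rho(H_0(k))$ with a gap that is uniform in $k$; hence each fiber $H_0(k)-z\Id$ is boundedly invertible as a map $\Df\to\Hf$, with inverse $R_z(k):=(H_0(k)-z\Id)^{-1}\in\mathcal{L}(\Hf,\Df)$ bounded uniformly for $k\in\mathbb{B}^d$. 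To show smoothness of the fibration $k\mapsto R_z(k)$ into $\mathcal{L}(\Hf,\Df)$, I would start from the second resolvent identity
\[
R_z(k') - R_z(k) = -\,R_z(k')\,\big(H_0(k') - H_0(k)\big)\,R_z(k),
\]
which already gives continuity because $H_0\in\PO(\Df,\Hf)$ forces $H_0(k')-H_0(k)\to 0$ in $\mathcal{L}(\Df,\Hf)$ as $k'\to k$.

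Dividing the directional increment in the $j$-th direction by the step and passing to the limit then produces the differentiation formula
\[
\partial_{k_j} R_z(k) = -\,R_z(k)\,\big(\partial_{k_j} H_0(k)\big)\,R_z(k),
\]
whose right-hand side is a genuine element of $\mathcal{L}(\Hf,\Df)$: reading right to left, $R_z(k)$ maps $\Hf$ into $\Df$, $\partial_{k_j}H_0(k)$ maps $\Df$ into $\Hf$ (this is where $H_0\in\PO(\Df,\Hf)$ enters), and the outer $R_z(k)$ maps $\Hf$ back into $\Df$, using the bounded inclusion $\Df\hookrightarrow\Hf$. An induction via the Leibniz rule expresses every higher derivative as a finite sum of products alternating copies of $R_z$ and derivatives of $H_0$; a check of the domains shows each such product again lands in $\mathcal{L}(\Hf,\Df)$, so $(H_0-z\Id)^{-1}\in\PO(\Hf,\Df)$. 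To pass to $\Pi_0$, I would use the Riesz representation: since $H_0$ is bounded below and $\mu$ lies in a gap uniform in $k$, the part of the spectrum below $\mu$ is compact and separated from $\mu$, so one fixes a $k$-independent, suitably oriented contour $\Gamma\subset\rho(H_0)$ enclosing it and writes fiberwise
\[
\Pi_0(k) = \frac{\iu}{2\pi}\oint_\Gamma (H_0(k)-z\Id)^{-1}\,\di z .
\]
On $\Gamma$ the integrand is smooth in $k$ into $\mathcal{L}(\Hf,\Df)$ with all derivatives bounded uniformly in $(z,k)\in\Gamma\times\mathbb{B}^d$, by the uniform resolvent bounds above; differentiation under the integral sign then yields $\Pi_0\in\PO(\Hf,\Df)$.

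Part~\ref{item:der Pi0 and H0} is then a direct iteration of Lemma~\ref{lem:derivation of P spaces}. Starting from $H_0\in\PO(\Df,\Hf)$ and applying the Lemma with $\Hi_1=\Df$, $\Hi_2=\Hf$ gives $[H_0,X_j]\in\PO(\Df,\Hf)$; since the output space matches the input space required by the Lemma, an induction on the number of commutators shows that every iterated commutator of $H_0$ with position operators stays in $\PO(\Df,\Hf)$. Symmetrically, starting from $\Pi_0\in\PO(\Hf,\Df)$ (just established) and applying the Lemma with $\Hi_1=\Hf$, $\Hi_2=\Df$ places all iterated commutators of $\Pi_0$ in $\PO(\Hf,\Df)$.

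I expect the main obstacle to be the smoothness of the resolvent fibration in the \emph{mixed} norm of $\mathcal{L}(\Hf,\Df)$ rather than in $\mathcal{L}(\Hf)$: one must track that every factor in the derivative formulas respects the domains, so that the alternating products of resolvents (mapping $\Hf\to\Df$) and derivatives of $H_0$ (mapping $\Df\to\Hf$) compose to operators in $\mathcal{L}(\Hf,\Df)$. The uniform-in-$k$ gap from Assumption~\ref{item:gap} is precisely what makes the resolvent bounds—and hence the interchange of derivative and contour integral—uniform on $\mathbb{B}^d$, and is therefore the load-bearing hypothesis behind both the differentiation-under-the-integral step and the boundedness of the $\PO$-norms.
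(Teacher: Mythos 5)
The paper never writes out a proof of this Proposition at all---it defers it to \cite[Sec.~3]{MaPaTe}---and your argument is precisely the standard one used there: fiberwise invertibility with bounds uniform in $k$ coming from the gap, smoothness of $k\mapsto(H_0(k)-z\Id)^{-1}$ via the resolvent identity, the Riesz contour formula for $\Pi_0(k)$ with differentiation under the integral sign, and iteration of Lemma~\ref{lem:derivation of P spaces} for both commutator statements, so your proposal is correct and follows essentially the same approach. The single point worth making explicit is the very first assertion, namely that each $(H_0(k)-z\Id)^{-1}$ is bounded into $\Df$ equipped with the $k$-\emph{independent} graph norm of $H_0(0)$ (the gap only gives bounds in $\mathcal{L}(\Hf)$ and in the $H_0(k)$-graph norm): this follows from closedness of $H_0(0)$ and the closed graph theorem, after which the resolvent-identity bootstrap you describe upgrades the pointwise bound to local, and hence by compactness of $\mathbb{B}^d$ to uniform, boundedness in $\mathcal{L}(\Hf,\Df)$.
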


Having specified the conditions on the model at equilibrium, we drive the system out of equilibrium by introducing an external constant electric field. We choose the direction of this field to be along a preferred coordinate, say $y$. The perturbation will then be modelled space-adiabatically by adding a linear potential to the unperturbed Hamiltonian, namely
\begin{equation}
\label{eqn:pertHam}
H^\eps := H_0 - \eps Y\,,
\end{equation}
where $ \eps \in [0,1]$.

We will be interested in measuring the response to this perturbation of a (possibly different) coordinate of the charge current operator, say along $x$: assuming charge carriers of unit charge,
\begin{equation}
\label{eqn:defn J}
J := \I [H_0, X]. 
\end{equation}
This response will be the current $\tau(J\,\rho)$, where $\rho$ is an appropriate out-of-equilibrium state. The next Section will be devoted to the construction of a projector $\Pi^\eps_n$ which well approximates the state of the system out of equilibrium, at least as far as the response of appropriate observables like the above current is concerned.

\section{Construction of the NEASS to all orders} \label{sec:NEASSAO}

In this Section we generalize the construction of the \emph{non-equilibrium almost-stationary state} (NEASS), realized up to the first order in $\eps$ in \cite[Section 4]{MaPaTe}, to all orders, following the construction performed in the context of interacting models on lattices by \cite{Teufel20} (see \cite{BachmannDeRoeckFraas17, MonacoTeufel17} for related statements in time-dependent adiabatic perturbation theory). For every $n\in\N$ the NEASS, denoted by $\Pi^\eps_n$, is determined uniquely (up to terms of order $\Or(\epsi^{n+1})$) by the following two properties:
\begin{enumerate}[label=(SA$_{\arabic*}$), ref=(SA$_{\arabic*}$)]
\item \label{item:SA1} $\Pi^\eps_n = \E^{-\I \epsi \mathcal{S}^\epsi_n} \, \Pi_0 \, \E^{\I\epsi \mathcal{S}^\epsi_n}$ for some bounded,  periodic and self-adjoint operator $\mathcal{S}^\epsi_n$;
\item \label{item:SA3} $\Pi^\eps_n$ almost-commutes with the Hamiltonian $H^\eps$, namely 
 $[H^\eps,\Pi^\eps_n] = \Or(\eps^{n+1})$.
\end{enumerate}
Here $\Or(\eps^{n+1})$ is understood in the sense of the operator norm. 

\begin{theorem}
\label{thm:constr NEASS}
Consider the Hamiltonian $H^\eps=H_0 - \eps Y$ where $H_0$ satisfies Assumption~\ref{assum:1}. Then, there exists a sequence $\{A_j\}_{j\in\N} \subset \PO(\Hf,\Df)$ such that, setting for any $n\in \N$
\begin{equation}
\label{eqn:defn Sneps}
\SC_n^{\eps}:=\sum_{j=1}^n \eps^{j-1} A_j \, \in \PO(\Hf,\Df)\, ,
\end{equation}
we have that
\begin{equation}
\label{eqn: NEASS prop}
\Pi^\eps_n := \E^{\I \eps \SC_n^{\eps}} \,\Pi_0\, \E^{-\I \eps \SC_n^{\eps}}\qquad \text{ satisfies }\qquad
[H^{\eps} , \Pi^{\eps}_n]=\eps^{n+1}[R^\eps_n,\Pi^{\eps}_n]
\end{equation}
where the map $[0,1]\ni \eps \mapsto R_n^\eps \in \PO(\Hf) \subset \B_{\infty}^{\tau}$ is bounded.
\end{theorem}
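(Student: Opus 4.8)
The plan is to build the generators $A_j$ recursively, order by order in $\eps$, following the space-adiabatic scheme of \cite{Teufel20} adapted to the present continuum setting, and then to control the resulting remainder uniformly on $[0,1]$. The key reduction is that, setting $U_n^\eps := \E^{\I\eps\SC_n^\eps}$, conjugation yields $[H^\eps,\Pi^\eps_n] = U_n^\eps\,[\widetilde H^\eps_n,\Pi_0]\,(U_n^\eps)^{-1}$ with $\widetilde H^\eps_n := (U_n^\eps)^{-1}H^\eps\,U_n^\eps = \E^{-\I\eps\SC_n^\eps}H^\eps\,\E^{\I\eps\SC_n^\eps}$; hence it suffices to arrange that the conjugated Hamiltonian almost-commutes with the \emph{fixed} projection $\Pi_0$ up to order $\eps^{n+1}$. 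I would expand $\widetilde H^\eps_n$ through the Hadamard formula $\widetilde H^\eps_n = \sum_{m\geq 0}\frac{(-\I\eps)^m}{m!}(\mathrm{ad}_{\SC_n^\eps})^m(H^\eps)$, where $\mathrm{ad}_A(\,\cdot\,) := [A,\,\cdot\,]$, noting that $[\,\cdot\,,Y]$ acts as the derivation $\I\,\partial_{k_y}$ on $\PO(\Hf,\Df)$ by Lemma~\ref{lem:derivation of P spaces}, so that every commutator that appears remains within the operator algebras of Section~\ref{sec: op alg}.

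With respect to $\Pi_0$ I decompose any operator as $B = B\su{d} + B\su{od}$, where $B\su{od} := \Pi_0 B\,\Pi_0^\perp + \Pi_0^\perp B\,\Pi_0$ and $\Pi_0^\perp := \Id - \Pi_0$; then $[\widetilde H^\eps_n,\Pi_0] = \Or(\eps^{n+1})$ is equivalent to requiring $(\widetilde H_\ell)\su{od} = 0$ for $\ell = 0,1,\dots,n$, where $\widetilde H_\ell$ denotes the coefficient of $\eps^\ell$ in the formal $\eps$-expansion of the conjugated Hamiltonian (stable in $n$ for $\ell\leq n$). Inspection of the Hadamard series shows $\widetilde H_\ell = \I\,[H_0,A_\ell] + F_\ell$, where $F_\ell$ is an iterated-commutator expression in $H_0$, $Y$ and $A_1,\dots,A_{\ell-1}$ only (at order $\ell$ the generator $A_\ell$ enters only linearly). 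Taking $A_\ell$ purely off-diagonal makes $[H_0,A_\ell]$ off-diagonal as well, since $H_0$ is block-diagonal, so $(\widetilde H_\ell)\su{od}=0$ reduces to $[H_0,A_\ell] = \I\,F_\ell\su{od}$. Writing $\LD_0 := [H_0,\,\cdot\,]$, the spectral gap (Assumption~\ref{assum:1}\ref{item:gap}) makes $\LD_0$ boundedly invertible on off-diagonal operators, with a gain of domain regularity mapping $\PO(\Hf)$ into $\PO(\Hf,\Df)$, so I set $A_\ell := \I\,\LD_0^{-1}(F_\ell\su{od})$; for $\ell = 1$ this gives $A_1 = -\I\,\LD_0^{-1}(Y\su{od})$, recovering the first-order generator of \cite{MaPaTe}. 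Self-adjointness of the $A_\ell$ follows by induction: treating the $A_j$ as self-adjoint symbols, $\widetilde H^\eps_n$ is formally self-adjoint, so $F_\ell$ is self-adjoint, $\I\,F_\ell\su{od}$ is anti-self-adjoint and off-diagonal, and $\LD_0^{-1}$ sends such operators to self-adjoint ones; this makes $\SC_n^\eps$ self-adjoint and $U_n^\eps$ unitary, as required by property~\ref{item:SA1}.

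Two things then remain: the membership $A_j \in \PO(\Hf,\Df)$ and the precise form of the remainder. The former is a bookkeeping induction, in which $Y$ enters only through the derivation $\I\,\partial_{k_y}$ (preserving $\PO(\Hf,\Df)$ by Lemma~\ref{lem:derivation of P spaces}), commutators with $H_0$ and operator products are controlled by Proposition~\ref{prop:resolvent-Pi0 der Pi0 and H0} together with the algebra properties (i)--(iii) of Section~\ref{sec: op alg}, and the inverse Liouvillian returns the relevant off-diagonal operators to $\PO(\Hf,\Df)$. For the remainder I note that, since $\SC_n^\eps$ is a polynomial in $\eps$ with coefficients in $\PO(\Hf,\Df)$, the map $\eps\mapsto[\widetilde H^\eps_n,\Pi_0]$ is smooth and $\PO(\Hf)$-valued on $[0,1]$; by the choice of $A_1,\dots,A_n$ its Taylor coefficients at $\eps=0$ vanish up to order $n$ (the generators $A_{n+1},A_{n+2},\dots$ omitted in $\SC_n^\eps$ would affect only orders $>n$), so Taylor's theorem with integral remainder yields $[\widetilde H^\eps_n,\Pi_0] = \eps^{n+1}\,G_n^\eps$ with $\eps\mapsto G_n^\eps\in\PO(\Hf)$ bounded and off-diagonal. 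Finally, using the identity $G = [\Pi_0,[\Pi_0,G]]$ valid for off-diagonal $G$ and conjugating by $U_n^\eps$, I obtain $[H^\eps,\Pi^\eps_n] = \eps^{n+1}\,[R_n^\eps,\Pi^\eps_n]$ with $R_n^\eps := -\,U_n^\eps\,[\Pi_0,G_n^\eps]\,(U_n^\eps)^{-1} \in \PO(\Hf)$ bounded in $\eps$, which is the assertion.

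I expect the main obstacle to be twofold and characteristic of this continuum, space-adiabatic setting: first, keeping the whole recursion inside $\PO(\Hf,\Df)$ despite the unboundedness of the perturbing potential $Y$, which is resolved by systematically trading each commutator with $Y$ for a momentum derivative via Lemma~\ref{lem:derivation of P spaces}; and second, extracting the remainder $R_n^\eps$ \emph{uniformly} in $\eps\in[0,1]$ from the infinite Hadamard series, which forces one to control all iterated commutators and their $\eps$-derivatives in a norm strong enough (that of $\PO(\Hf,\Df)$) that they remain bounded operators on the whole compact parameter interval.
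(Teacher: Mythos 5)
Your proposal is correct and follows essentially the same route as the paper's proof: conjugate $H^\eps$ by $\E^{\I\eps\SC_n^\eps}$ so that the task becomes making the transformed Hamiltonian almost-commute with the \emph{fixed} projection $\Pi_0$, expand order by order in $\eps$, determine each generator $A_\ell$ (chosen purely off-diagonal) by inverting the Liouvillian on the off-diagonal part of the lower-order data, and conjugate the uniformly bounded remainder back. The only differences are cosmetic bookkeeping: you use the full Hadamard series plus the integral-form Taylor remainder where the paper uses a finite Taylor expansion with a Lagrange-type remainder, you recast the remainder as a commutator via the identity $G=[\Pi_0,[\Pi_0,G]]$ whereas the paper simply defines $R_n^\eps$ as the conjugated remainder term, and you add an explicit (and welcome) inductive verification of self-adjointness of the $A_\ell$, which the paper leaves implicit.
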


\begin{proof}
We start by computing
\begin{align*}
[H^{\eps} , \Pi^{\eps}_n]&=\E^{\I \eps \SC_n^{\eps}}\left[ \E^{-\I \eps \SC_n^{\eps}}H_0\E^{\I \eps \SC_n^{\eps}}-\eps \E^{-\I \eps \SC_n^{\eps}}Y \E^{\I \eps \SC_n^{\eps}}, \Pi_0\right]           \E^{-\I \eps \SC_n^{\eps}}\,.
\end{align*}
Hence, it suffices to choose the operators $A_j$ in such way that there exists $R_n^{\eps}$ uniformly bounded in $\eps$ with
\begin{equation}
\label{eqn:remainder exp}
\left[ \E^{-\I \eps \SC_n^{\eps}}H_0\E^{\I \eps \SC_n^{\eps}}-\eps \E^{-\I \eps \SC_n^{\eps}}Y \E^{\I \eps \SC_n^{\eps}}, \Pi_0\right]=\eps^{n+1}[\E^{-\I \eps \SC_n^{\eps}}R_n^\eps\E^{\I \eps \SC_n^{\eps}},\Pi_0].
\end{equation}

Consider the Taylor expansion in $\lambda$ near $\lambda_0=0$ of the expression
\[ \E^{-\I \lambda \SC_n^{\eps}}B\E^{\I \lambda \SC_n^{\eps}} \,. \]
Evaluating such expansion at $\lambda=\epsi$, one obtains for some $\tilde{\eps}\in [0,\eps]$
\begin{align*}
\E^{-\I \epsi \SC_n^{\eps}}B\E^{\I \epsi \SC_n^{\eps}}&=\sum_{k=0}^n\frac{\eps^k}{k!}\LI_{\SC_n^{\eps}}^k(B)+\frac{\eps^{n+1}}{(n+1)!}\E^{-\I \tilde{\epsi} \SC_n^{\eps}}\LI_{\SC_n^{\eps}}^{n+1}(B)\E^{\I \tilde{\epsi} \SC_n^{\eps}}\cr
&=:\sum_{j=0}^n\eps^j B_{j}+\eps^{n+1}B_{n+1}(\eps).
\end{align*}
On the first line, we use the notation $\LI_A(B):=-\iu[A,B]$, we denoted by $\LI^k_A(B)$ the $k$ nested commutators $[-\iu A,[\dots,[-\iu A,[-\iu A,B]]\dots]]$ for $k\geq 1$ and we set $\LI^0_A(B):=B$.
On the second line, we collected in $B_j$ all the terms of order $\epsi^j$, $0\le j\le n$, coming from the power expansion of $\SC_n^\eps$ as in the statement, while $B_{n+1}(\eps)$ contains contributions from higher-order powers of $\epsi$ and still defines a uniformly bounded function of $\eps$. In particular, each of these coefficients are expressed as nested commutators involving (possibly different) $A_\mu$'s with $B$: for example
\[ B_{0}=B,\quad B_{1}=-\iu[A_1,B],\quad B_{2}=-\frac{1}{2}[A_1, [A_1,B]]-\iu[A_2,B]. \]

We apply the expansion above to $B = H_0$ and a similar expansion, up to order $n-1$, to $B=Y$. We will therefore denote the corresponding coefficients by $H_{0,j}$ and $Y_j$, respectively. Notice that the presence of an extra factor of $\epsi$ in the perturbation $H_\eps - H_0 = -\eps Y$ shifts the indices of the coefficients $Y_j$ by one in the following equations. 

Plugging all the expansions into \eqref{eqn:remainder exp} yields
\begin{equation}
\label{eqn:remainder}
\sum_{j=1}^n\eps^j\left[H_{0,j} - Y_{j-1},\Pi_0 \right]+\eps^{n+1}\left[  H_{0,n+1}(\eps) - Y_{n}(\eps),\Pi_0    \right]=\eps^{n+1}[\E^{-\I \epsi \SC_n^{\eps}}R_n^{\eps}\E^{\I \epsi \SC_n^{\eps}},\Pi_0]
\end{equation}
(notice that the sum on the left-hand side starts from $j=1$ as for $j=0$ we get $[H_{0,0},\Pi_0] = [H_0, \Pi_0] = 0$). Thus, it suffices to determine $A_1,\dots,A_n$ in such a way that for all $j\in\{1,\dots,n\}$
\begin{equation} \label{eqn:recursion}
0=\left[H_{0,j} - Y_{j-1},\Pi_0 \right]\,.
\end{equation}
To this end, it is convenient to notice that
\[ H_{0,j} = \LI_{A_j}(H_0) + L_{j-1} = - \LI_{H_0}(A_j) + L_{j-1} \]
where $L_{j-1}$ involves commutators of $H_0$ with the operators $A_\mu$ with $\mu < j$. Therefore, if we assume that $A_1,\ldots,A_{j-1}$ have been already determined, then $L_{j-1}$ is given and only $A_j$ is still unknown in the above equality: this suggests to determine $A_1, \ldots, A_n$ recursively.

Let us first start then by determining $A_1$ from \eqref{eqn:recursion}. Since $L_0 = 0$ and $Y_0 = Y$, the equation for $A_1$ reads
\[ 0 = [- \LI_{H_0}(A_1)- Y, \Pi_0] \quad \Longrightarrow \quad [ \LI_{H_0}(A_1), \Pi_0] = \LI_{H_0}([A_1,\Pi_0]) = - [Y,\Pi_0]. \]
Notice that the operator $[Y,\Pi_0]$ is off-diagonal with respect to the decomposition $\Hi = \Ran \Pi_0 \oplus (\Ran \Pi_0)^\perp$, i.e.
\[ [Y,\Pi_0] = [Y,\Pi_0]\su{OD} \quad \text{where} \quad T\su{OD} := \Pi_0 \, T \, \Pi_0^\perp + \Pi_0^\perp \, T \, \Pi_0 = \big[[T,\Pi_0],\Pi_0\big] \]
(we denote by $\Pi_0^\perp := \Id-\Pi_0$ the orthogonal projection on $(\Ran \Pi_0)^\perp$). As is well-known (see Appendix \ref{app:invLiou}) the Liouvillian $\LI_{H_0}$ is invertible on such operators, yielding
\[ [A_1,\Pi_0] = \left[A_1\su{OD},\Pi_0\right] = \LI_{H_0}^{-1}(-[Y,\Pi_0]). \]
Taking a further commutator of both sides with $\Pi_0$, we conclude
\[ A_1\su{OD} = \left[\left[A_1\su{OD},\Pi_0\right],\Pi_0\right] = \left[\LI_{H_0}^{-1}(-[Y,\Pi_0]),\Pi_0\right] = -\LI_{H_0}^{-1}(Y\su{OD}). \]
The above considerations hence determine uniquely the off-diagonal part of $A_1$; we may then choose to set
\[ A_1\su{D} := 0. \]

For $1 < j \le n$ we are then required to solve
\begin{multline*}
0= -\LI_{H_0}\(\left[A_j,\Pi_0 \right]\)+\left[L_{j-1},\Pi_0 \right] - \left[Y_{j-1},\Pi_0 \right] \\
 \Longrightarrow \quad \left[A_j,\Pi_0 \right] = \LI_{H_0}^{-1} \left(\left[L_{j-1},\Pi_0 \right] - \left[Y_{j-1},\Pi_0 \right]\right)
 \end{multline*}
or, arguing as above, 
\[ A_j\su{OD} = \LI_{H_0}^{-1} \left(D_{j-1}\right) \quad \text{where} \quad D_{j-1} := \left(L_{j-1} - Y_{j-1}\right)\su{OD}. \]
Observe that $L_{j-1}$ and $Y_{j-1}$ are determined by the previously computed $A_1, \ldots, A_{j-1}$. Once again, we choose $A_j$ to be purely off-diagonal, that is,
\[ A_j\su{D}:=0. \]

In conclusion, we have determined
\[ \SC_n^{\eps}=\sum_{j=1}^n \eps^{j-1} A_j = \LI_{H_0}^{-1} \left( \sum_{\ell=0}^{n-1} \eps^\ell \, D_\ell \right)\in\PO(\Hf,\Df).\]
As each $A_j$ and then $\SC_n^\eps$ are inverse Liouvillians of off-diagonal operators in $\PO(\Hf)$ by virtue of Lemma~\ref{lem:derivation of P spaces}, they are naturally in $\PO(\Hf,\Df)$ as claimed (see Proposition \ref{prop:I(A)}). With this definition of $\SC_n^{\eps}$, it follows by construction that \eqref{eqn:remainder} will be satisfied if we set
\[ R_n^{\eps} := \E^{\I \epsi \SC_n^{\eps}} \left( H_{0,n+1}(\eps) - Y_{n}(\eps)  \right) \E^{-\I \epsi \SC_n^{\eps}}. \]
Clearly, this remainder term is uniformly bounded as a function of $\eps$ with values in $\PO(\Hf)$ in view of the previous discussion on $H_{0,n+1}(\eps)$ and $ Y_{n}(\eps)$.
\end{proof}

\begin{remark}
Clearly the expression $\Pi^\eps_n = \E^{\I \eps \SC_n^{\eps}} \,\Pi_0\, \E^{-\I \eps \SC_n^{\eps}}$ can be used also to obtain a Taylor expansion for the NEASS in powers of $\eps$:
\[ \Pi^\eps_n = \Pi_0 + \eps\,\Pi_1+\eps^2 \,\Pi_2 + \cdots + \eps^n \, \Pi_n + \eps^{n+1} \Pi\sub{reminder}(\eps). \]
The coefficients $\Pi_\ell$ in the above expansion are computable in terms of the $A_j$'s in the statement of Theorem~\ref{thm:constr NEASS}: more specifically, $\Pi_\ell$ will be determined from $A_1, \ldots, A_\ell$. As these $A_j$'s are determined inductively as in the proof (that is, $A_1, \ldots, A_\ell$ determine $A_{\ell+1}$), it is also clear that the above Taylor expansions for the NEASS's $\Pi_n^\eps$ and $\Pi_{n+1}^\eps$ coincide up to order $\eps^n$. Explicit expressions for the coefficients in this Taylor expansion are in any case not needed for the proof of our main Theorem~\ref{thm:main result} in the next Section.
\end{remark}

\noindent We conclude this Section with some immediate consequences from the previous Theorem, which will be used in the next Section.

\begin{corollary}
\label{cor:NEASS}
Consider the Hamiltonian $H^\eps=H_0 - \eps Y$ where $H_0$ satisfies Assumption~\ref{assum:1}. Then we have that for every $n\in\N$
\begin{enumerate}[label=(\roman*), ref=(\roman*)]
\item \label{item:offdiag Heps wrt Pieps} the operator $(\Pi^{\eps}_n)^\perp H^{\eps} \Pi^\eps_n=\eps^{n+1}(\Pi^{\eps}_n)^\perp R_n^{\eps} \Pi^\eps_n$ lies in $\B_{1}^{\tau}$ and the map $[0,1]\ni \eps\mapsto (\Pi^{\eps}_n)^\perp H^{\eps} \Pi^\eps_n\in \B_{1}^{\tau}$ is bounded;
\item \label{item:exp-id} for $\SC_n^{\eps}$ as in \eqref{eqn:defn Sneps}, the operator $\E^{\I \epsi \SC_n^{\eps}}-\Id$ lies in $\PO(\Hf,\Df)$ and the map $[0,1]\ni \eps\mapsto \E^{\I \epsi \SC_n^{\eps}}-\Id \in \PO(\Hf,\Df)$ is bounded;
\item \label{item:reg Pieps} the NEASS operator $\Pi^{\eps}_n$ lies in $\PO(\Hf,\Df)$ and the map $[0,1]\ni \eps\mapsto [X,\Pi^{\eps}_n]\in \PO(\Hf)$ is bounded.
\end{enumerate}
\end{corollary}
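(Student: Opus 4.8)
The plan is to read off all three items directly from the conclusion of Theorem~\ref{thm:constr NEASS}, proving them in the order (ii), (iii), (i): the regularity in (ii) feeds into (iii), which in turn is what makes the formal manipulations in (i) rigorous. The only inputs I will need are the algebra properties of $\PO(\Hf,\Df)$ and $\PO(\Hf)$ recorded right after their definition, Lemma~\ref{lem:derivation of P spaces}, the fact that $\Pi_0\in\PO(\Hf,\Df)\cap\B_1^\tau$ (Assumption~\ref{assum:1} and Proposition~\ref{prop:resolvent-Pi0 der Pi0 and H0}\ref{item:resolvent-Pi0}), and the ideal properties of the trace per unit volume recalled in Appendix~\ref{sec:TPUV}.

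For (ii) I would use that, since $\Df\subset\Hf$ with bounded inclusion, $\PO(\Hf,\Df)$ is a normed algebra, so each power $(\SC_n^\eps)^k$ with $k\ge 1$ again lies in it with $\|(\SC_n^\eps)^k\|_{\PO(\Hf,\Df)}\le\|\SC_n^\eps\|_{\PO(\Hf,\Df)}^k$. The series
\[ \E^{\I\eps\SC_n^\eps}-\Id=\sum_{k\ge 1}\frac{(\I\eps)^k}{k!}\,(\SC_n^\eps)^k \]
then converges in $\PO(\Hf,\Df)$ and exhibits $\E^{\I\eps\SC_n^\eps}-\Id$ as an element of that space, with the crude uniform bound $\|\E^{\I\eps\SC_n^\eps}-\Id\|_{\PO(\Hf,\Df)}\le \E^{\|\SC_n^\eps\|_{\PO(\Hf,\Df)}}-1$. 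This is bounded on $[0,1]$ because $\SC_n^\eps=\sum_{j=1}^n\eps^{j-1}A_j$ is a polynomial in $\eps$ with fixed coefficients $A_j\in\PO(\Hf,\Df)$.

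For (iii) I would expand
\[ \Pi^\eps_n-\Pi_0=(\E^{\I\eps\SC_n^\eps}-\Id)\,\Pi_0+(\E^{\I\eps\SC_n^\eps}-\Id)\,\Pi_0\,(\E^{-\I\eps\SC_n^\eps}-\Id)+\Pi_0\,(\E^{-\I\eps\SC_n^\eps}-\Id) \]
and read off membership summand by summand: each factor $\Pi_0$ and each difference $\E^{\pm\I\eps\SC_n^\eps}-\Id$ lies in $\PO(\Hf,\Df)$ (by Proposition~\ref{prop:resolvent-Pi0 der Pi0 and H0}\ref{item:resolvent-Pi0} and by (ii), respectively), and the algebra structure of $\PO(\Hf,\Df)$ closes every such product, so $\Pi^\eps_n\in\PO(\Hf,\Df)$. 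For the commutator, Lemma~\ref{lem:derivation of P spaces} gives $[X,\Pi^\eps_n]\in\PO(\Hf,\Df)\subset\PO(\Hf)$, and uniform boundedness on $[0,1]$ follows from continuity of the (polynomial-in-$\eps$) map $\eps\mapsto\Pi^\eps_n\in\PO(\Hf,\Df)$ on the compact interval, or, if one prefers an explicit estimate, from a Duhamel expansion of $[X,\E^{\pm\I\eps\SC_n^\eps}]$ against the bounded operator $[X,\SC_n^\eps]\in\PO(\Hf,\Df)$.

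Finally, for (i) the starting point is the identity of Theorem~\ref{thm:constr NEASS}: multiplying it on the left by $(\Pi^\eps_n)^\perp$ and on the right by $\Pi^\eps_n$, and using $(\Pi^\eps_n)^\perp\Pi^\eps_n=0$ together with idempotency, collapses both sides to
\[ (\Pi^\eps_n)^\perp H^\eps\Pi^\eps_n=(\Pi^\eps_n)^\perp[H^\eps,\Pi^\eps_n]\Pi^\eps_n=\eps^{n+1}\,(\Pi^\eps_n)^\perp R_n^\eps\Pi^\eps_n. \]
The right-hand side lies in $\B_1^\tau$ because $\Pi^\eps_n$ does---being a projection with the same finite-rank fibers as $\Pi_0$, equivalently a periodic-unitary conjugate of $\Pi_0\in\B_1^\tau$---while $(\Pi^\eps_n)^\perp$ and $R_n^\eps\in\PO(\Hf)$ are bounded and $\B_1^\tau$ is a two-sided ideal; the uniform bound
\[ \|(\Pi^\eps_n)^\perp H^\eps\Pi^\eps_n\|_{\B_1^\tau}\le\eps^{n+1}\,\|R_n^\eps\|_{\B_\infty^\tau}\,\|\Pi^\eps_n\|_{\B_1^\tau} \]
then follows from the $\eps$-uniform control of $R_n^\eps$ in Theorem~\ref{thm:constr NEASS} together with $\tau(\Pi^\eps_n)=\tau(\Pi_0)$. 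The one step that genuinely needs care, and which I regard as the main (if mild) obstacle, is precisely the first equality: a priori $(\Pi^\eps_n)^\perp H^\eps\Pi^\eps_n$ involves the unbounded operator $H^\eps$, and it is only through $\Pi^\eps_n\in\PO(\Hf,\Df)$ from (iii)---so that $\Ran\Pi^\eps_n$ lies in the domain of $H_0$ and the $Y$-commutator is interpreted as the derivation of Lemma~\ref{lem:derivation of P spaces}---that this sandwiched off-diagonal block is a bona fide bounded, indeed $\B_1^\tau$, operator rather than a merely formal expression.
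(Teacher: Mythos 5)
Your overall architecture is the same as the paper's: item (i) is obtained, exactly as in the paper, by sandwiching the identity \eqref{eqn: NEASS prop} between $(\Pi^\eps_n)^\perp$ and $\Pi^\eps_n$ and using that $\Pi^\eps_n$ is a periodic-unitary conjugate of $\Pi_0\in\B_1^\tau$ together with the ideal property of $\B_1^\tau$; and your series/Leibniz arguments for (ii)--(iii) reconstruct the content of \cite[Lemma 6.4]{MaPaTe}, which is what the paper simply cites. However, two of your steps need repair. First, in (ii): convergence of $\sum_{k\ge 1}\frac{(\I\eps)^k}{k!}(\SC_n^\eps)^k$ in the \emph{norm} of $\PO(\Hf,\Df)$ does not by itself place the limit in $\PO(\Hf,\Df)$, because that norm is only $\max_{k\in\mathbb{B}^d}\norm{\,\cdot\,}_{\mathcal{L}(\Hf,\Df)}$, whereas membership requires \emph{smoothness} of the fibration $k\mapsto \E^{\I\eps\SC_n^\eps(k)}-\Id$; a uniform limit of smooth operator-valued functions need not be smooth. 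You must additionally differentiate the series term by term in $k$ (Leibniz on $(\SC_n^\eps(k))^m$ gives a bound of order $m\,C^m\norm{\SC_n^\eps}^{m-1}\norm{\partial_k\SC_n^\eps}$, and similarly for higher derivatives) and check that each differentiated series still converges uniformly; this is precisely what the cited lemma packages.

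Second, and more seriously, in (iii): your primary argument for the uniform bound on $[X,\Pi^\eps_n]$ --- continuity of the \virg{polynomial-in-$\eps$} map $\eps\mapsto\Pi^\eps_n\in\PO(\Hf,\Df)$ on a compact interval --- fails. The map is not polynomial in $\eps$ (the exponentials are not), and, more to the point, by Lemma~\ref{lem:derivation of P spaces} the fiber of $[X_j,A]$ is $-\iu\partial_{k_j}A(k)$, so $\norm{[X,\Pi^\eps_n]}_{\PO(\Hf)}$ is a supremum of $k$-\emph{derivatives} of the fibers of $\Pi^\eps_n$, which is not controlled by $\norm{\Pi^\eps_n}_{\PO(\Hf,\Df)}$: neither uniform boundedness nor continuity of $\eps\mapsto\Pi^\eps_n$ in that (sup-over-$k$) norm says anything about $[X,\Pi^\eps_n]$, and pointwise finiteness of $\eps\mapsto\norm{[X,\Pi^\eps_n]}_{\PO(\Hf)}$ on $[0,1]$ does not imply boundedness. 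Your fallback --- Leibniz on $[X,\E^{\I\eps\SC_n^\eps}\Pi_0\E^{-\I\eps\SC_n^\eps}]$ combined with the Duhamel formula
\[
[X,\E^{\I\eps\SC_n^\eps}]=\I\eps\int_0^1 \E^{\I s\eps\SC_n^\eps}\,[X,\SC_n^\eps]\,\E^{\I(1-s)\eps\SC_n^\eps}\,\di s ,
\]
with $[X,\SC_n^\eps]\in\PO(\Hf,\Df)$ uniformly bounded in $\eps$ --- is therefore not an optional refinement but the actual proof (and is, in substance, what the paper's appeal to the Leibniz rule and \cite[Lemma 6.4]{MaPaTe} amounts to). With these two repairs the proposal is correct.
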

\begin{proof}
{\it \ref{item:offdiag Heps wrt Pieps}} The statement is an immediate consequence of \eqref{eqn: NEASS prop}, the fact that $\Pi^{\eps}_n \, (\Pi^{\eps}_n)^\perp = (\Pi^{\eps}_n)^\perp \, \Pi^{\eps}_n = 0$, and the fact that $\Pi^{\eps}_n$ is unitarily equivalent to $\Pi_0$, and is therefore a projection in $\B_1^\tau$ by hypothesis \ref{item:gap}.\\
{\it \ref{item:exp-id}} In view of \eqref{eqn:defn Sneps} $\SC_n^{\eps}\in\PO(\Hf,\Df)$, thus \cite[Lemma 6.4]{MaPaTe} implies the thesis.\\
{\it \ref{item:reg Pieps}} By using Proposition \ref{prop:resolvent-Pi0 der Pi0 and H0}\ref{item:der Pi0 and H0}, \cite[Lemma 6.4]{MaPaTe} and the Leibniz rule, we obtain that $[X,\Pi^{\eps}_n]\in\PO(\Hf)$ and its norm is bounded uniformly in $\eps$.
\end{proof}

\section{Validity of the Kubo formula beyond the linear regime} \label{sec:main result}

We are finally able to state our main result.

\begin{theorem}
\label{thm:main result}
Consider the Hamiltonian $H^\eps=H_0 - \eps Y$ where $H_0$ satisfies Assumption~\ref{assum:1}. Then for every $ n\in\N$ we have that
\[
\tau(J \,\Pi^\eps_n)=\eps\,\sigma\sub{Hall}+\Or(\eps^{n+1}),
\]
where $J$ is the charge current operator in~\eqref{eqn:defn J}, the NEASS $\Pi^\eps_n$ is as in the statement of Theorem~\ref{thm:constr NEASS}, and
\[
\sigma\sub{Hall}:=\iu \tau(\Pi_0\left[[\Pi_0,X],[\Pi_0,Y]\right]).
\]
\end{theorem}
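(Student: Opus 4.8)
The plan is to compute $\tau(J\,\Pi^\eps_n)$ directly, use the almost-stationarity property \eqref{eqn: NEASS prop} to peel off an explicit term linear in $\eps$ (with everything else of order $\Or(\eps^{n+1})$), recognise this term as $\eps$ times the Hall-conductivity functional \emph{evaluated at the NEASS}, and finally identify the latter with $\sigma\sub{Hall}$ via the Chern--Simons-like invariance of Proposition~\ref{prop:CS formula}. Throughout I abbreviate $P := \Pi^\eps_n$, write $P^\perp := \Id - P$, and use $T\su{OD} = \big[[T,P],P\big]$ for the off-diagonal part of an operator $T$ with respect to the splitting $\Hi = \Ran P \oplus (\Ran P)^\perp$. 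The one structural fact I keep exploiting is that the non-periodic operators $X,Y$ enter only through the manifestly periodic commutators $[X,\cdot]$, $[Y,\cdot]$.

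First I would integrate by parts. Since $J = \iu[H_0,X]$ and $H_0\,P$ is a periodic, trace-per-unit-volume class operator with smooth fibration (by Assumption~\ref{assum:1}, Proposition~\ref{prop:resolvent-Pi0 der Pi0 and H0} and Corollary~\ref{cor:NEASS}), the vanishing of $\tau\big([X,\,\cdot\,]\big)$ on such operators together with the Leibniz rule gives
\[
\tau(J\,P) = \iu\,\tau\big([H_0,X]\,P\big) = \iu\,\tau\big(H_0\,[X,P]\big),
\]
where $[X,P]\in\PO(\Hf,\Df)$ by Lemma~\ref{lem:derivation of P spaces}, so the right-hand side is well posed. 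The key observation is that $[X,P]$ is off-diagonal with respect to $P$ (indeed $P[X,P]P = P^\perp[X,P]P^\perp = 0$), whence only $H_0\su{OD}$ contributes to the trace. Writing $H_0 = H^\eps + \eps Y$ and inserting \eqref{eqn: NEASS prop} in the form $[H^\eps,P] = \eps^{n+1}[R^\eps_n,P]$ yields
\[
H_0\su{OD} = \big[[H_0,P],P\big] = \eps\,\big[[Y,P],P\big] + \eps^{n+1}\big[[R^\eps_n,P],P\big],
\]
with every commutator periodic. Since $H_0\su{OD}[X,P]\in\B_1^\tau$ with the remainder of order $\Or(\eps^{n+1})$ by Corollary~\ref{cor:NEASS}\ref{item:offdiag Heps wrt Pieps}, and since a short rearrangement using cyclicity of $\tau$ and the off-diagonality of $[X,P]$ and $[Y,P]$ gives the algebraic identity $\iu\,\tau\big(\big[[Y,P],P\big]\,[X,P]\big) = \iu\,\tau\big(P\,\big[[P,X],[P,Y]\big]\big) =: \sigma(P)$, I arrive at
\[
\tau(J\,\Pi^\eps_n) = \eps\,\sigma(\Pi^\eps_n) + \Or(\eps^{n+1}).
\]

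It then remains to prove $\sigma(\Pi^\eps_n) = \sigma(\Pi_0) = \sigma\sub{Hall}$, which is the content of the Chern--Simons-like formula (Proposition~\ref{prop:CS formula}): in the Bloch--Floquet--Zak representation $\sigma(\Pi)$ is, up to normalisation, the Chern number of the finite-rank bundle $k\mapsto\Pi(k)$, and since $\Pi^\eps_n = \E^{\iu\eps\SC_n^\eps}\,\Pi_0\,\E^{-\iu\eps\SC_n^\eps}$ is fibrewise homotopic to $\Pi_0$ through rank-$m$ periodic projections, the two functionals coincide \emph{exactly}. Combining the displays gives $\tau(J\,\Pi^\eps_n) = \eps\,\sigma\sub{Hall} + \Or(\eps^{n+1})$, as claimed.

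I expect the main obstacle to be twofold. On the analytic side, one must justify every manipulation involving the unbounded, non-periodic $X,Y,H_0$: namely that all products that appear — e.g.\ $H_0\,[X,P]$ and $H_0\su{OD}[X,P]$ — genuinely lie in $\B_1^\tau$ with smooth fibration and the claimed order in $\eps$, so that the integration by parts, the cyclicity of $\tau$, and the block decomposition are legitimate. This is exactly what the operator-algebra framework of Section~\ref{sec: op alg}, Proposition~\ref{prop:resolvent-Pi0 der Pi0 and H0} and Corollary~\ref{cor:NEASS} are designed to supply. On the conceptual side, the entire passage from \emph{linear} to \emph{all} orders rests on the final step, the exact $\eps$-independence of $\sigma(\Pi^\eps_n)$: without the topological (Chern--Simons) invariance one would only control the $\Or(\eps^2)$ correction and recover ordinary linear response.
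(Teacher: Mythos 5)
Your proposal is correct and follows essentially the same route as the paper: the linear term is extracted from the NEASS almost-invariance \eqref{eqn: NEASS prop} combined with the vanishing of the trace per unit volume on commutators with position operators (Proposition~\ref{prop:tau}\ref{item:tau[A,X]=0}), all products are controlled by the algebra $\PO(\Hi_1,\Hi_2)$, $\B_1^\tau$ via Proposition~\ref{prop:resolvent-Pi0 der Pi0 and H0} and Corollary~\ref{cor:NEASS}, and the coefficient $\iu\,\tau\big(\Pi^\eps_n\big[[\Pi^\eps_n,X],[\Pi^\eps_n,Y]\big]\Pi^\eps_n\big)$ is identified with $\sigma\sub{Hall}$ exactly as in the paper, through Proposition~\ref{prop:CS formula} and Remark~\ref{rmk:ChernMarker} (your parenthetical homotopy/Chern-number gloss is only a heuristic; the unitary-invariance statement of Proposition~\ref{prop:CS formula} is what does the work, with $U=\E^{\I\eps\SC_n^\eps}\in\PO(\Hf)$ by Corollary~\ref{cor:NEASS}\ref{item:exp-id}). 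The only difference is bookkeeping: you integrate by parts first and pair the off-diagonal part $H_0\su{OD}=\eps\big[[Y,\Pi^\eps_n],\Pi^\eps_n\big]+\Or(\eps^{n+1})$ against $[X,\Pi^\eps_n]$, whereas the paper compresses everything between NEASS projections and works with $\big[\Pi^\eps_n H^\eps\Pi^\eps_n,\Pi^\eps_n X\Pi^\eps_n\big]$ together with Proposition~\ref{prop:useful} --- the same algebraic moves in a different order.
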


The above Theorem states that the conductivity associated to the response of the current operator $J$ is given by the Hall conductivity $\sigma\sub{Hall}$, a quantity which is defined only through the equilibrium Fermi projection $\Pi_0$ and which emerges at the linear level (Kubo formula), up to orders which are arbitrarily high in the strength of the perturbing electric field. The result thus establishes the validity of the Kubo formula for this conductivity also beyond linear response.

The proof of the above Theorem relies on a number of intermediate steps, which we detail first.

\subsection{A useful lemma}

As a first tool to be used in the argument for the main Theorem~\ref{thm:main result}, we prove the following

\begin{proposition} \label{prop:useful}
Let $P$ be a projection on $\Hi$ such that $P \in \PO(\Hf) \cap \B_{1}^{\tau}$. Assume that the operator $A$ is such that $PAP \in \PO(\Hf)$. Then for all $j\in\set{1,\ldots,d}$ the trace per unit volume of the commutator $[PAP, PX_jP]$ is well-defined and 
\[ \tau([PAP, PX_jP])=0. \]
\end{proposition}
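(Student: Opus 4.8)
The plan is to first convert the compressed commutator, which involves the unbounded and non-periodic operator $X_j$, into a genuinely periodic operator to which the trace per unit volume and its cyclicity may be applied. Working on the dense core $\UZ^{-1}C^\infty_\varrho(\R^d,\Hf)$, on which all products with $X_j$ are well defined, I would use $P^2 = P$ (hence $P\,PAP = PAP\,P = PAP$) to verify the operator identity
\[ [PAP,\,PX_jP] \;=\; P\,[PAP,\,X_j]\,P. \]
Since $PAP \in \PO(\Hf)$, Lemma~\ref{lem:derivation of P spaces} guarantees that $[PAP,X_j] \in \PO(\Hf)$, with fibre $-\iu\,\partial_{k_j}(PAP)(k)$; thus the right-hand side is bounded and periodic. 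Writing $P^\perp := \Id - P$, and noting that $P \in \B_{1}^{\tau}$ while $\B_{1}^{\tau}$ is a two-sided ideal in the bounded periodic operators, the product $P\,[PAP,X_j]\,P$ lies in $\B_{1}^{\tau}$ (its fibres being finite-rank); this already yields the asserted well-definedness of $\tau([PAP,PX_jP])$.

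Next I would compute the trace. By cyclicity of $\tau$ (pairing the factor $P \in \B_{1}^{\tau}$ with a bounded periodic operator) and $P^2 = P$, one has $\tau(P\,[PAP,X_j]\,P) = \tau([PAP,X_j]\,P)$. Applying the Leibniz rule for the derivation $[\,\cdot\,,X_j]$ to the factorisation $PAP = (PAP)\,P$ gives $[PAP,X_j] = [PAP,X_j]\,P + PAP\,[P,X_j]$, whence
\[ \tau([PAP,X_j]\,P) \;=\; \tau([PAP,X_j]) \;-\; \tau\big(PAP\,[P,X_j]\big). \]
The first term vanishes because the trace per unit volume annihilates commutators with position operators: on the fibre side $\tau([PAP,X_j]) = -\tfrac{\iu}{(2\pi)^d}\int_{\mathbb{B}^d}\partial_{k_j}\Tr_{\Hf}((PAP)(k))\,\di k = 0$, the integrand being the $k_j$-derivative of the $\Gamma^*$-periodic function $k \mapsto \Tr_{\Hf}((PAP)(k))$ (periodicity following from the covariance $(PAP)(k+\gamma^*) = \varrho_{\gamma^*}(PAP)(k)\varrho_{\gamma^*}^{-1}$ of the periodic operator $PAP$ together with cyclicity of $\Tr_{\Hf}$).

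For the remaining term I would exploit that $[P,X_j]$ is purely off-diagonal with respect to $P$, i.e.\ $P[P,X_j]P = P^\perp[P,X_j]P^\perp = 0$, a direct consequence of $P^2=P$. Since $PAP$ carries only the diagonal $P(\cdot)P$ block, this forces $PAP\,[P,X_j] = PAP\,[P,X_j]\,P^\perp$; one final use of cyclicity, moving the trace-class factor $PAP$ past the bounded periodic operator $[P,X_j]\,P^\perp$, produces the factor $P^\perp P = 0$, so that $\tau(PAP\,[P,X_j]) = \tau([P,X_j]\,P^\perp\,PAP) = 0$. Combining the two vanishing contributions yields $\tau([PAP,PX_jP]) = 0$.

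The conceptual content is the opening identity; everything afterwards is bookkeeping with the algebraic properties of $\tau$. Accordingly, the main obstacle I anticipate is not the idea but the rigour: justifying $[PAP,PX_jP] = P[PAP,X_j]P$ and each cyclic rearrangement despite the unboundedness of $X_j$. This requires working on the smooth core $\UZ^{-1}C^\infty_\varrho(\R^d,\Hf)$, invoking Lemma~\ref{lem:derivation of P spaces} to realise $[\,\cdot\,,X_j]$ as an operator within $\PO(\Hf)$, and checking at every application of cyclicity that a $\B_{1}^{\tau}$ factor is genuinely paired with a bounded periodic one — a point that matters here because $A$ itself is controlled only through the sandwich $PAP$, so $A$ must never be isolated from its neighbouring projections.
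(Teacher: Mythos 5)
Your proof is correct and takes essentially the same route as the paper's: both reduce the compressed commutator to the full commutator $[PAP,X_j]$ --- whose trace per unit volume vanishes --- plus remainder terms built from the off-diagonal operator $[P,X_j]$, which are killed by cyclicity of $\tau$ together with $P^\perp P=0$ (the paper phrases this as the decomposition $[PAP,PX_jP]=[PAP,X_j]-[PAP,X_j\su{OD}]$, you as a compression-plus-Leibniz identity). The one point to tighten: where you re-derive $\tau([PAP,X_j])=0$ by differentiating the fibre trace $k\mapsto\Tr_{\Hf}((PAP)(k))$ --- an interchange of derivative and trace that you do not justify --- you can instead cite Proposition~\ref{prop:tau}\ref{item:tau[A,X]=0} verbatim, which is exactly the paper's step and makes the argument fully rigorous.
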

\begin{proof}
We observe that
\[ [PAP, PX_jP] = [PAP, X_j] - [PAP, X_j\su{OD}] \]
where $X_j\su{OD}$ refers to the off-diagonal decomposition of the operator $X_j$ with respect to the projection $P$, and the equality is first established on the dense subspace $\UZ^{-1}C^\infty_\varrho(\R^d,\Hf)$. Proposition~\ref{prop:tau}\ref{item:tau[A,X]=0} implies that the first summand on the right-hand side has vanishing trace per unit volume, since $PAP \in \PO(\Hf)\cap \B_1^\tau$. On the other hand, the second summand is $\B_1^\tau$ in view of the hypothesis that $P \in \PO(\Hf)\cap \B_1^\tau$ (see Lemma \ref{lem:derivation of P spaces}); invoking Proposition ~\ref{prop:resolvent-Pi0 der Pi0 and H0}\ref{item:der Pi0 and H0} and Lemma~\ref{prop:cycl of tau}, we conclude that its trace per unit volume vanishes.
\end{proof}

\begin{remark} \label{rmk:ChernMarker}
Observe that the above Proposition does \emph{not} apply to the operator $A=X_i$, $i\ne j$; indeed (compare e.g. \cite[Eq. (2.15)]{MarcelliMoscolariPanati})
\[ [PX_iP, PX_jP] = P\big[[P,X_i],[P,X_j]\big]P, \]
where the equality is first established on $\UZ^{-1}C^\infty_\varrho(\R^d,\Hf)$. Therefore the trace per unit volume $\tau([PX_iP, PX_jP])$ equals (up to a factor $2\pi\iu$) the \emph{Chern marker} of the projection $P$, which may very well be non-zero. 

As we saw in the statement of the main result, the Chern marker of the Fermi projection $\Pi_0$ defines the linear response coefficient $\sigma\sub{Hall}$. It's worth mentioning that, in the present context of periodic operators, the (non-)vanishing of the Chern marker has been linked to the (non-)existence of localized orthogonal \emph{Wannier functions} spanning the Fermi projection $\Pi_0$ \cite{MonacoPanatiPisanteTeufel18, CorneanMonacoMoscolari18, Gianfa}. In a more general setting, where periodicity is broken, one can still define a generalized notion of Wannier basis for an isolated spectral island, but the relation of its existence with the Chern marker remains to be fully understood. In fact, while the existence of a generalized Wannier basis (with suitable localization) has been shown to imply the vanishing of the Chern marker (see \cite{MarcelliMoscolariPanati, LuStubbs}), the converse implication \cite{LuStubbsWatson} remains a challenging and interesting line of research.
\end{remark}

\subsection{Chern--Simons formula}
Following the previous Remark, we prove an analogue of the \emph{Chern--Simons formula} which establishes the invariance of the Chern marker of a projection under unitary conjugation. This formula, well rooted in differential geometry and bundle theory, was exploited in \cite{KleinSeiler} in a context similar to ours. 

\begin{proposition}[Chern--Simons formula]
\label{prop:CS formula}
Let $P \in \PO(\Hf) \cap \B_{1}^{\tau}$ be a projection and $U \in \PO(\Hf)$ be unitary. Define $P_U := U P U^{-1}$. Then
\[ \tau([P_UX_iP_U, P_UX_jP_U]) = \tau([PX_iP, PX_jP]). \]
\end{proposition}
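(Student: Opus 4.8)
The plan is to reduce the statement, via the algebraic identity highlighted in Remark~\ref{rmk:ChernMarker}, to a direct manipulation of the compressed position operators, and then to dispatch the error terms using Proposition~\ref{prop:useful} together with the cyclicity of the trace per unit volume. The guiding observation is that conjugating $P$ by $U$ amounts, at the level of the compressed operator $P_U X_i P_U$, to conjugating $P$ by $U$ after shifting $X_i$ by a \emph{bounded} periodic correction. Concretely, setting $W_i := U^{-1}[U, X_i]$ and using the elementary identity $U^{-1} X_i U = X_i - W_i$, I would first establish
\[ P_U X_i P_U = U P \, (U^{-1} X_i U) \, P \, U^{-1} = U\,(P X_i P - P W_i P)\,U^{-1}. \]
Here $W_i \in \PO(\Hf)$: indeed $U \in \PO(\Hf)$ is unitary so $U^{-1} \in \PO(\Hf)$, while $[U, X_i] \in \PO(\Hf)$ by Lemma~\ref{lem:derivation of P spaces}, and $\PO(\Hf)$ is an algebra.

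Taking the commutator of the two compressions and using that $\tau$ is invariant under conjugation by the periodic unitary $U$ (cyclicity, Lemma~\ref{prop:cycl of tau}; the inner operator lies in $\B_1^\tau$ because $P \in \B_1^\tau$), I would obtain
\[ \tau([P_U X_i P_U, P_U X_j P_U]) = \tau\big([P X_i P - P W_i P,\; P X_j P - P W_j P]\big). \]
Expanding the right-hand commutator by bilinearity produces four contributions: the ``wanted'' term $\tau([P X_i P, P X_j P])$, two mixed terms $-\tau([P X_i P, P W_j P])$ and $-\tau([P W_i P, P X_j P])$, and the term $\tau([P W_i P, P W_j P])$. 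The heart of the argument is then to show that the three extra terms all vanish. Each mixed term is killed by Proposition~\ref{prop:useful}: taking $A = W_j$ (respectively $A = W_i$) is admissible because $P W_j P \in \PO(\Hf)$, and the Proposition yields $\tau([P W_j P, P X_i P]) = 0$ and $\tau([P W_i P, P X_j P]) = 0$, the former being $-\tau([P X_i P, P W_j P])$ up to sign. The remaining term $\tau([P W_i P, P W_j P])$ vanishes by plain cyclicity, since both $P W_i P$ and $P W_j P$ are sandwiched by $P \in \B_1^\tau$ and hence lie in $\B_1^\tau$, so the trace of their commutator is zero. Assembling these, only $\tau([P X_i P, P X_j P])$ survives, which is exactly the claim.

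The main obstacle—really the only point that needs care—is the operator-algebraic bookkeeping that legitimises each appearance of $\tau$. One must verify that $W_i \in \PO(\Hf)$ and hence $P W_i P \in \PO(\Hf) \cap \B_1^\tau$, that the inner commutator belongs to $\B_1^\tau$ so that conjugation-invariance of $\tau$ applies, and that each of the four expanded terms separately has a well-defined trace per unit volume (so that splitting the bilinear expansion under $\tau$ is justified, despite $P X_i P$ being individually unbounded). All of these follow from Lemma~\ref{lem:derivation of P spaces}, Proposition~\ref{prop:resolvent-Pi0 der Pi0 and H0}, the algebra structure of $\PO(\Hf)$, and the fact that $\B_1^\tau$ is stable under multiplication by bounded periodic operators; none is deep, but together they turn the formal computation into a rigorous one. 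It is worth noting that this route never invokes the differential-geometric homotopy invariance of the Chern number, nor any connectedness of $U$ to the identity: the invariance emerges purely from Proposition~\ref{prop:useful} and cyclicity.
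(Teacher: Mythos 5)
Your proof is correct and follows essentially the same route as the paper's: conjugate by $U$ (using cyclicity of $\tau$), expand $U^{-1}X_iU = X_i + U^{-1}[X_i,U]$ into four commutator terms, kill the two mixed terms via Proposition~\ref{prop:useful} and the last term via Lemma~\ref{prop:cycl of tau}, with Lemma~\ref{lem:derivation of P spaces} guaranteeing that $U^{-1}[X_i,U] \in \PO(\Hf)$. Your explicit attention to the bookkeeping (why each expanded term has a well-defined trace per unit volume) is a point the paper leaves implicit, but the argument is the same.
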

\begin{proof}
Write
\begin{align*}
U^{-1}[P_UX_iP_U&, P_UX_jP_U]U  = [PU^{-1}X_iUP, PU^{-1}X_jUP] \\
& = [PX_iP, PX_jP]  + \big[PU^{-1}[X_i,U]P, PX_jP\big] \\
& \quad+ \big[PX_iP, PU^{-1}[X_j,U]P\big] + \big[PU^{-1}[X_i,U]P, PU^{-1}[X_j,U]P\big].
\end{align*}
Notice now that, with the standing assumptions, the operators $U^{-1}[X_i,U]$ and $U^{-1}[X_j,U]$ are in $\PO(\Hf)$ by virtue of Lemma \ref{lem:derivation of P spaces}. Using Proposition \ref{prop:useful} and Lemma \ref{prop:cycl of tau}, the conclusion follows.
\end{proof}

\subsection{Proof of Theorem~\ref{thm:main result}}
First of all, notice that by the very definition of the current operator $J$ in \eqref{eqn:defn J} and the construction of the NEASS $\Pi^\eps_n$ in Theorem~\ref{thm:constr NEASS} 
\begin{align*}
J \,\Pi^\eps_n&=\iu [H_0,X] \(\E^{\I \eps \SC_n^{\eps}}-\Id \)\Pi_0\, \E^{-\I \eps \SC_n^{\eps}}+\iu [H_0,X]  \Pi_0\, \E^{-\I \eps \SC_n^{\eps}},
\end{align*}
where each summand on the right-hand side is in $\B_{1}^{\tau}$. Indeed, for the first summand observe that
\[
[H_0,X]\cdot \(\E^{\I \eps \SC_n^{\eps}}-\Id \)\cdot \Pi_0\cdot \E^{-\I \eps \SC_n^{\eps}}\in\PO(\Df,\Hf)\cdot \PO(\Hf,\Df)\cdot \B_{1}^{\tau}\cdot \PO(\Hf)\subset \PO(\Hf)\cap\B_{1}^{\tau},
\]  
by applying Proposition~\ref{prop:resolvent-Pi0 der Pi0 and H0}\ref{item:der Pi0 and H0}, Corollary~\ref{cor:NEASS}\ref{item:exp-id}, and hypothesis \ref{item:gap}. Similarly, for the second summand note that
\[
[H_0,X] \cdot \Pi_0 \cdot \Pi_0\, \E^{-\I \eps \SC_n^{\eps}}\in\PO(\Df,\Hf)\cdot \PO(\Hf,\Df)\cdot \B_{1}^{\tau}\cdot \PO(\Hf)\subset \PO(\Hf)\cap\B_{1}^{\tau}.
\]
In view of the cyclicity of the trace per unit volume, Corollary~\ref{cor:NEASS}\ref{item:offdiag Heps wrt Pieps} and Corollary~\ref{cor:NEASS}\ref{item:reg Pieps}, we have that\footnote{We are allowed to perform all the following algebraic manipulations since the range of $\UZ \chi_{\FC_1}$ is contained in $C^\infty_\varrho(\R^d,\Hf)$ (see Definitions \ref{dfn:Cinfty} and \ref{defn:tuv}).}
\begin{equation} \label{eqn:JPiesp}
\begin{aligned}
&\tau\left( [H_0,X]\Pi^\eps_n \right)=\tau\left( \Pi^\eps_n [H^\eps,X]\Pi^\eps_n \right)\cr
&=\tau\left(  [\Pi^\eps_n H^\eps \Pi^\eps_n ,\Pi^\eps_n X \Pi^\eps_n] \right)+\eps^{n+1}\,\tau\left( \Pi^\eps_n R_n^{\eps}\left(\Pi^{\eps}_n\right)^\perp X \Pi^\eps_n - \Pi^\eps_n X \left(\Pi^{\eps}_n\right)^\perp R_n^{\eps} \Pi^\eps_n\right)\cr
&=\tau\left(  [\Pi^\eps_n H^\eps \Pi^\eps_n ,\Pi^\eps_n X \Pi^\eps_n] \right)+\eps^{n+1}\,\tau\left( \Pi^\eps_n \big[[\Pi^\eps_n, R_n^{\eps}], [X ,\Pi^\eps_n ]\big] \Pi^\eps_n\right)\cr
&=\tau\left(  [\Pi^\eps_n H_0 \Pi^\eps_n ,\Pi^\eps_n X \Pi^\eps_n] \right)-\eps\,\tau\left(  [\Pi^\eps_n Y \Pi^\eps_n ,\Pi^\eps_n X \Pi^\eps_n] \right)\\
&\phantom{=}+\eps^{n+1}\,\tau\left( \Pi^\eps_n \big[[\Pi^\eps_n, R_n^{\eps}], [X ,\Pi^\eps_n ]\big] \Pi^\eps_n\right)\cr
\end{aligned}
\end{equation}
where the term carrying the prefactor $\eps^{n+1}$ is uniformly bounded in $\eps$. Observe that in view of Corollary~\ref{cor:NEASS}\ref{item:exp-id} and Proposition~\ref{prop:resolvent-Pi0 der Pi0 and H0}\ref{item:resolvent-Pi0} we have that
\[
\Pi^\eps_n H_0 \Pi^\eps_n=\Pi^\eps_n H_0\cdot(\E^{\I \epsi \SC_n^{\eps}}-\Id)\Pi_0\E^{-\I \epsi \SC_n^{\eps}}+\Pi^\eps_n H_0\cdot\Pi_0\E^{-\I \epsi \SC_n^{\eps}}\in\PO(\Hf,\Df)\cdot\PO(\Df,\Hf)\subset \PO(\Hf),
\]
thus Proposition~\ref{prop:useful} implies that the first summand on the right-hand side of~\eqref{eqn:JPiesp} vanishes. On the other hand, by Corollary~\ref{cor:NEASS}\ref{item:exp-id} the unitary $\E^{\I \epsi \SC_n^{\eps}}$ is in $\PO(\Hf)$, therefore Proposition~\ref{prop:CS formula} and Remark~\ref{rmk:ChernMarker} imply that the second summand in \eqref{eqn:JPiesp} can be rewritten as
\[
-\eps\,\tau(  [\Pi^\eps_n Y \Pi^\eps_n ,\Pi^\eps_n X \Pi^\eps_n] )=\eps\,\tau(  [\Pi_0 X \Pi_0 ,\Pi_0 Y \Pi_0] )=\eps\,\tau(\Pi_0\big[[\Pi_0,X],[\Pi_0,Y]\big]\Pi_0).
\]
This concludes the proof. \qed


\appendix
\section{Trace per unit volume} \label{sec:TPUV}

Here we recall the definition and the main properties of the trace-per-unit-volume functional (for further details see \cite[Section 2]{MaPaTe} and references therein). For any $L\in 2\N+1$, we define
\begin{equation}
\label{eqn:defn FCL}
\FC_L:=\left\lbrace x\in \X : x = \sum_{j=1}^d \alpha_j \, a_j \text{ with }  |\alpha_j|\leq L/2 \; \forall\: j \in \set{1,\ldots,d}\right\rbrace
\end{equation}
and $\chi_L:=\chi_{\FC_L}$, denoting the orthogonal projection on $\Hi$ which multiplies by the characteristic function of $\FC_L$. In particular, the set $\FC_1$ is called a \emph{fundamental cell}. 

\noindent We say that an operator $A$ acting in $\Hi$ is \emph{trace class on compact sets} if and only if $\chi_K A \chi_K$ is trace class for all compact sets $K \subset \X$
\footnote{Notice that in the discrete case this condition is automatically satisfied  for any operator $A$ because the range of $\chi_K$ is finite-dimensional.}. 

\begin{definition}[Trace per unit volume]
\label{defn:tuv}
Let $A$ be an operator acting in $\Hi$ such that $ A $ is trace class on compact sets. The \emph{trace per unit volume} of $A$ is defined as
\begin{equation}
\label{eqn:defn tau}
\tau(A):=\lim_{\substack{L\to\infty\\L\in 2\N+1}}\frac{1}{\abs{\FC_L}}\Tr(\chi_L A \chi_L),
\end{equation}
whenever the limit exists.
\end{definition}
\noindent The most relevant properties of the trace per unit volume are presented in the following results, whose proofs can be found in~\cite[Section 2]{MaPaTe}. We detail an argument only for Proposition~\ref{prop:tau}\ref{item:tau[A,X]=0}, which is not present in the above reference.

\noindent We introduce the vector spaces
\begin{gather*}
\B_\infty^\tau := \set{\text{bounded \emph{periodic} operators on } \Hi}, \\
\B_1^\tau := \set{\text{$A\in \B_\infty^\tau$ such that } \norm{A}_{1,\tau}:=\tau(\abs{A}) < \infty}.
\end{gather*}
We recall that $\B_1^\tau$ is invariant by left and right multiplication by elements of $\B_\infty^\tau\supset \PO(\Hf)$. Similarly to the standard trace, the trace per unit volume is (conditionally) \emph{cyclic}.
\begin{lemma}[Cyclicity of the trace per unit volume]
\label{prop:cycl of tau}
If $A\in\B_1^\tau$ and $B\in\B_\infty^\tau$, then $\tau(AB)=\tau(BA).$
\end{lemma}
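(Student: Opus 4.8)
The plan is to reduce the statement to the ordinary cyclicity of the trace on each fiber of the Bloch--Floquet--Zak representation. First I would check that both sides are meaningful: since $A\in\B_1^\tau$ and $B\in\B_\infty^\tau$, the invariance of $\B_1^\tau$ under left and right multiplication by bounded periodic operators (recalled just above the statement) guarantees $AB,BA\in\B_1^\tau$, so that $\tau(AB)$ and $\tau(BA)$ are both well-defined and, in particular, computable through the fiber formula of Proposition~\ref{prop:tau}.

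Next I would pass to the fibered picture. As $A$ and $B$ are periodic, $\UZ A\,\UZ^{-1}=\int_{\R^d}^{\oplus}\di k\,A(k)$ and $\UZ B\,\UZ^{-1}=\int_{\R^d}^{\oplus}\di k\,B(k)$, and since direct integrals multiply fiberwise one has $\UZ(AB)\UZ^{-1}=\int_{\R^d}^{\oplus}\di k\,A(k)B(k)$ together with $\UZ(BA)\UZ^{-1}=\int_{\R^d}^{\oplus}\di k\,B(k)A(k)$. The hypothesis $A\in\B_1^\tau$ translates, via Proposition~\ref{prop:tau}\ref{item:per+traceclassfibr}, into the statement that $A(k)$ is trace class on $\Hf$ for almost every $k$ and that $k\mapsto\norm{A(k)}_1$ is integrable over $\mathbb{B}^d$, while $B(k)$ is essentially uniformly bounded.

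On each fiber I then invoke the usual cyclicity of the trace on $\Hf$: for the trace-class operator $A(k)$ and the bounded operator $B(k)$, both $A(k)B(k)$ and $B(k)A(k)$ are trace class and $\Tr_{\Hf}(A(k)B(k))=\Tr_{\Hf}(B(k)A(k))$. Integrating this pointwise identity over the fundamental cell $\mathbb{B}^d$ and applying the fiber formula
\[ \tau(C)=\frac{1}{(2\pi)^d}\int_{\mathbb{B}^d}\di k\,\Tr_{\Hf}(C(k)) \]
to $C=AB$ and to $C=BA$ yields $\tau(AB)=\tau(BA)$.

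The step I expect to carry the actual content — as opposed to the routine fiberwise cyclicity — is the justification that the fiber formula for $\tau$ genuinely applies to $AB$ and $BA$: that is, that membership in $\B_1^\tau$ is equivalent to having integrable trace-norm fibers and that the integral of fiber traces reproduces the trace per unit volume. This is precisely what Proposition~\ref{prop:tau} supplies; without it one would have to argue directly from the spatial definition \eqref{eqn:defn tau}, controlling the difference $\Tr(\chi_L AB\chi_L)-\Tr(\chi_L BA\chi_L)$ by boundary contributions of order $L^{d-1}$ that are washed out after division by $\abs{\FC_L}\sim L^d$ — a more delicate commutator estimate that the fibered approach conveniently bypasses.
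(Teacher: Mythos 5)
The paper never proves this lemma itself --- it is quoted, together with all of Proposition~\ref{prop:tau}, from \cite[Section 2]{MaPaTe} --- so your proposal has to be measured against the tools the present paper actually makes available. Measured this way, there is a genuine gap, and it sits precisely at the step you yourself flag as carrying ``the actual content''. You assert that membership in $\B_1^\tau$ is \emph{equivalent} to having a.e.\ trace-class fibers with integrable trace norms, and that the fiber formula therefore applies to $AB$ and $BA$, claiming this ``is precisely what Proposition~\ref{prop:tau} supplies''. It is not. Proposition~\ref{prop:tau}\ref{item:per+traceclassfibr} is a one-way implication pointing in the direction opposite to the one you need: it \emph{assumes} that every fiber $C(k)$ is trace class with a \emph{uniform} bound $\Tr_{\Hf}(\abs{C(k)})<C$ for all $k\in\mathbb{B}^d$, and only then identifies $\Tr(\chi_1 C\chi_1)$ with the integral of the fiber traces. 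The paper contains no converse statement (that $C\in\B_1^\tau$ forces its fibers to be trace class a.e.), and no version of the formula under mere integrability of $k\mapsto\Tr_{\Hf}(\abs{C(k)})$. Moreover, the uniform bound is not implied by your hypotheses: elements of $\B_1^\tau$ and $\B_\infty^\tau$ carry no fiber regularity at all (unlike elements of $\PO(\Hf)$, which have smooth fibration), so at best the fiber trace norms of $AB$ are integrable --- think of a covariant measurable field of finite-rank projections whose rank is integrable but unbounded in $k$. Since the fiberwise cyclicity $\Tr_{\Hf}(A(k)B(k))=\Tr_{\Hf}(B(k)A(k))$ is the trivial part, the unjustified applicability of the fiber formula to $AB$ and $BA$ is not a technicality: it is the entire lemma.

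The strategy can be repaired, but it needs an input the paper does not state: a Tonelli-type strengthening of Proposition~\ref{prop:tau}\ref{item:per+traceclassfibr} for \emph{positive} periodic operators,
\[
\Tr(\chi_1 C\chi_1)=\frac{1}{\abs{\mathbb{B}^d}}\int_{\mathbb{B}^d}\di k\,\Tr_{\Hf}(C(k))\in[0,+\infty],
\]
valid with both sides simultaneously finite or infinite (proved by monotone convergence over a basis of $L^2(\FC_1,\C^N)$ whose Bloch--Floquet--Zak transform is fiberwise orthonormal). Applied to $\abs{A}$, $\abs{AB}$ and $\abs{BA}$, this yields a.e.\ trace-class fibers and an $L^1$-version of the fiber formula, after which your computation goes through. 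Alternatively one can stay entirely in position space, which is the standard route in the reference the paper cites: insert $\Id=\sum_{\gamma\in\Gamma}T_\gamma\chi_1 T_\gamma^{-1}$ into $\Tr(\chi_1 AB\chi_1)$, use cyclicity of the ordinary trace (legitimate because $A$ is trace class on compact sets, by Proposition~\ref{prop:tau}\ref{item:taucont}) together with periodicity of $A$ and $B$ to relabel the lattice sum, and conclude $\Tr(\chi_1 AB\chi_1)=\Tr(\chi_1 BA\chi_1)$, which by Proposition~\ref{prop:tau}\ref{item:per+traceclassoncompact} is the claim. Your closing remark about $\Or(L^{d-1})$ boundary terms is also off target: once Proposition~\ref{prop:tau}\ref{item:per+traceclassoncompact} reduces $\tau$ to a single cell, no large-$L$ limit is involved at all, and for general periodic operators --- whose kernels need not decay away from the diagonal --- such boundary estimates would in fact be unavailable.
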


\noindent The next result collects all the essential properties of the trace per unit volume.
\begin{proposition} 
\label{prop:tau}
\begin{enumerate}[label=(\roman*), ref=(\roman*)]
\item \label{item:taucont} Let $A\in \B_1^\tau$. Then 
\[
\Tr(\abs{\chi_L A \chi_L})<\infty\quad\forall\, L\in 2\N+1.
\]
In particular, we have that $A$ is trace class on compact sets.
\item \label{item:per+traceclassoncompact} Let $A$ be periodic and trace class on compact sets. Then $\tau(A)$ is well-defined and 
\[
\tau(A) =\frac{1}{\abs{\FC_1}} \Tr(\chi_1 A \chi_1).
\]
\item \label{item:per+traceclassfibr} Let $A$ be a periodic and bounded operator acting on $\Hi$.
Denoting by
\[ \UZ \, A \, \UZ^{-1} = \int_{\mathbb{R}^d}^{\oplus}\di k\, A(k)   \]
its Bloch--Floquet--Zak decomposition, assume that $A(k)$ is trace class and that $\Tr_{\Hf}(\abs{A(k)})<C$ for all $k\in \mathbb{B}^d$. Then
\[
\Tr(\chi_1 A \chi_1) = \frac{1}{|\mathbb{B}^d|} \int_{\mathbb{B}^d}  \di k\,\Tr_{\Hf}(A(k)). 
\]
\item \label{item:tau[A,X]=0} Let $A\in \PO(\Hf)\cap\B_1^\tau$. Then $\tau([A,X_i])$ is well-defined and $\tau([A,X_i])=0$ for every $1\leq i\leq d$.
\end{enumerate} 
\end{proposition}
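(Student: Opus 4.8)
The plan is to reduce everything to the fiber picture in the Bloch--Floquet--Zak representation and then exploit periodicity in the crystal momentum. First I would invoke Lemma~\ref{lem:derivation of P spaces}: since $A \in \PO(\Hf)$, the operator $[A,X_i]$ again belongs to $\PO(\Hf)$ — in particular it is periodic — and its fiber is $[A,X_i](k) = -\iu\,\partial_{k_i} A(k)$. I would then evaluate the trace per unit volume through the fiber formula of Proposition~\ref{prop:tau}\ref{item:per+traceclassfibr}, writing
\[ \tau([A,X_i]) = \frac{1}{(2\pi)^d}\int_{\mathbb{B}^d}\di k\,\Tr_{\Hf}\big(-\iu\,\partial_{k_i}A(k)\big) = \frac{-\iu}{(2\pi)^d}\int_{\mathbb{B}^d}\di k\,\partial_{k_i}\Tr_{\Hf}\big(A(k)\big), \]
where in the last equality I commute the fiber trace with the derivative in $k_i$.

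The vanishing then follows from a periodicity argument. Set $g(k) := \Tr_{\Hf}(A(k))$. Using the covariance property $A(k+\gamma^*) = \varrho_{\gamma^*}A(k)\varrho_{\gamma^*}^{-1}$ of periodic operators together with the unitarity of $\varrho_{\gamma^*}$ and the cyclicity of the ordinary trace on $\Hf$, one gets $g(k+\gamma^*) = g(k)$ for all $\gamma^* \in \Gamma^*$, i.e.\ $g$ is $\Gamma^*$-periodic. Since $\mathbb{B}^d$ is a fundamental cell for $\Gamma^*$, the integral over $\mathbb{B}^d$ of the partial derivative $\partial_{k_i} g$ vanishes, as the boundary contributions cancel by periodicity (fundamental theorem of calculus in the $k_i$ variable). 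This yields $\tau([A,X_i]) = 0$.

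For the well-definedness of $\tau([A,X_i])$ I would argue directly from Definition~\ref{defn:tuv}: the operator $[A,X_i]$ is periodic by the above, and it is trace class on compact sets because on each cell $X_i$ acts as a bounded multiplication operator. Indeed, since $\chi_L X_i = \chi_L X_i \chi_L$ as multiplication operators, one has
\[ \chi_L[A,X_i]\chi_L = (\chi_L A\chi_L)(\chi_L X_i\chi_L) - (\chi_L X_i\chi_L)(\chi_L A\chi_L), \]
a combination of products of the trace-class operator $\chi_L A \chi_L$ (here $A \in \B_1^\tau$ and Proposition~\ref{prop:tau}\ref{item:taucont}) with the bounded operator $\chi_L X_i \chi_L$, hence trace class. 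Proposition~\ref{prop:tau}\ref{item:per+traceclassoncompact} then guarantees that $\tau([A,X_i])$ exists and equals $\frac{1}{|\FC_1|}\Tr(\chi_1[A,X_i]\chi_1)$.

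The main obstacle is the analytic justification of the two manipulations in the first display: that the fiber $\partial_{k_i}A(k)$ is trace class with uniformly bounded trace norm (so that Proposition~\ref{prop:tau}\ref{item:per+traceclassfibr} applies) and that the fiber trace commutes with $\partial_{k_i}$. These amount to upgrading the smoothness of $k\mapsto A(k)$ from the operator-norm topology (given by $A \in \PO(\Hf)$) to the trace-norm topology, using that $A(k)$ is trace class with integrable trace norm (given by $A \in \B_1^\tau$); I would control this via the uniform bounds encoded in the $\PO(\Hf)$ and $\B_1^\tau$ norms. As a sanity check and an alternative route, the conclusion is transparent at the level of integral kernels: the kernel of $[A,X_i]$ is $(y_i - x_i)A(x,y)$, which vanishes on the diagonal $x=y$, so the cell trace $\int_{\FC_1}\tr_{\C^N}\big((x_i-x_i)A(x,x)\big)\di x$ is zero — a computation I would make rigorous by using the regularizing factor $(y_i-x_i)$ to control the diagonal behaviour.
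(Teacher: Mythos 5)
Your proposal only addresses item \ref{item:tau[A,X]=0}, which is indeed the only item the paper proves (items \ref{item:taucont}--\ref{item:per+traceclassfibr} are quoted from the literature), so the scope is right. However, your argument for the vanishing has a genuine gap, precisely at the point you yourself flag as ``the main obstacle'' and then leave unresolved. To apply Proposition~\ref{prop:tau}\ref{item:per+traceclassfibr} to $[A,X_i]$ you need its fiber $-\iu\,\partial_{k_i}A(k)$ to be trace class with uniformly bounded trace norm, and to exchange $\Tr_{\Hf}$ with $\partial_{k_i}$ you need $k\mapsto A(k)$ to be differentiable in \emph{trace} norm. Neither follows from $A\in\PO(\Hf)\cap\B_1^\tau$: membership in $\PO(\Hf)$ gives smoothness of the fibration only in operator norm, while $\B_1^\tau$ controls only the trace norm of $A(k)$ itself, not of its derivative. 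The abstract implication you would need is simply false: for a diagonal operator $A(k)$ on $\ell^2$ with entries $a_n(k)=\sin\big((\log n)k\big)/(n\log^2 n)$, $n\ge 2$, the map $k\mapsto A(k)$ is $C^\infty$ in operator norm and $\Tr\abs{A(k)}\le\sum_n (n\log^2 n)^{-1}<\infty$ uniformly in $k$, yet the operator-norm derivative at $k=0$ has entries $(n\log n)^{-1}$ and is \emph{not} trace class, and one checks that $g(k)=\Tr A(k)$ is not even differentiable at $k=0$. So the route through the fiber formula cannot be justified from the stated hypotheses; it would require trace-norm regularity of the fibration, which is neither assumed nor established.

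The fix is short, and it is exactly what the paper does: your well-definedness paragraph already contains essentially the whole proof. You correctly showed that $[A,X_i]\in\PO(\Hf)$ (Lemma~\ref{lem:derivation of P spaces}), that
\[
\chi_1[A,X_i]\chi_1=(\chi_1 A\chi_1)(\chi_1 X_i\chi_1)-(\chi_1 X_i\chi_1)(\chi_1 A\chi_1)
\]
with $\chi_1 A\chi_1$ trace class (item~\ref{item:taucont}) and $\chi_1 X_i\chi_1$ bounded, and that consequently $\tau([A,X_i])=\frac{1}{\abs{\FC_1}}\Tr(\chi_1[A,X_i]\chi_1)$ by item~\ref{item:per+traceclassoncompact}. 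Now simply apply cyclicity of the ordinary trace to the right-hand side (trace-class times bounded): the two terms have equal trace, so $\Tr(\chi_1[A,X_i]\chi_1)=0$, hence $\tau([A,X_i])=0$. No fiber decomposition, no differentiation under the trace, and no periodicity-in-$k$ argument is needed. Your kernel-based ``sanity check'' should likewise not be leaned on: the trace of a trace-class operator equals the integral of its kernel along the diagonal only under additional regularity assumptions, so it cannot substitute for the argument above.
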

\begin{proof}
{\it \ref{item:tau[A,X]=0}} By Lemma~\ref{lem:derivation of P spaces} we have that $[A,X_i]\in\PO(\Hf)$. Observe that the operator
\[
\chi_L [A,X_i] \chi_L=\chi_L A \chi_L X_i \chi_L-\chi_L X_i \chi_L A \chi_L\, \text{ is trace class on compact sets},
\]
by applying Proposition~\ref{prop:tau}\ref{item:taucont} and noticing that $\chi_L X_i \chi_L$ is bounded. Thus, Proposition~\ref{prop:tau}\ref{item:per+traceclassoncompact} implies that
$$
\abs{\FC_1}\cdot \tau([A,X_i])=\Tr(\chi_1 [A,X_i] \chi_1)=\Tr(\chi_1 A\chi_1 X_i \chi_1-\chi_1 X_i\chi_1 A \chi_1),
$$
where both summands inside the trace are trace class because $\chi_1 A \chi_1$ is trace class. The cyclicity of the standard trace concludes the proof.
\end{proof}

\section{Inverse Liouvillian}
\label{app:invLiou}

Here we recall the expression of the inverse Liouvillian $\LI_{H_0}^{-1}$, associated with the unperturbed Hamiltonian $H_0$, and its relevant properties. 

We look for the solution $B$ to the equation $\LI_{H_0}(B)=-\iu [H_0,B] = A$, where $A=A\su{OD}\in \PO(\Hf)$  is off-diagonal with respect to the decomposition $\Hi = \Ran \Pi_0 \oplus (\Ran \Pi_0)^\perp$. We state in the following Proposition, whose proof can be found in \cite[Subsection 6.2]{MaPaTe}, the solution to this problem, which traces back at least to \cite[Equation (2.11)]{AvronSeilerYaffe} (see also \cite[Equation (A10)]{KleinSeiler}).

\begin{proposition} \label{prop:I(A)}
Under Assumption~\ref{assum:1}, let $A\in \PO(\Hf)$ be such that $A=A\su{OD}$ with respect to $\Pi_0$. Then the unique off-diagonal solution in $\PO(\Hf,\Df)$ to the equation
\[
\LI_{H_0}(B)=-\iu [H_0,B] = A \qquad  \text{on} \quad \UZ^{-1}L^2_{\varrho}(\R^d,\Df),
\]
is given by
\begin{equation}
\label{eqn:invLiou}
B=\LI_{H_0}^{-1}(A):= \frac{1}{2 \pi} \oint_C \di z \, (H_0 - z \Id)^{-1} \, [\Pi_0,A] \, (H_0 - z \Id)^{-1}.
\end{equation}
\end{proposition}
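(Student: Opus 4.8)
The plan is to verify directly that the contour integral in \eqref{eqn:invLiou} solves the equation, then check off-diagonality and the regularity claim $B \in \PO(\Hf,\Df)$, and finally establish uniqueness. Throughout I take $C$ to be a positively-oriented contour contained in the resolvent set $\rho(H_0)$ and encircling exactly the part of $\sigma(H_0)$ lying below the gap, so that the Riesz formula $\Pi_0 = \frac{\iu}{2\pi}\oint_C (H_0 - z\Id)^{-1}\,\di z$ holds; all algebraic manipulations are understood fiberwise, on the dense domain $\UZ^{-1}C^\infty_\varrho(\R^d,\Df)$, where the resolvent $R(z) := (H_0 - z\Id)^{-1}$ and its products act between the relevant domains.

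The core computation is the verification that $\LI_{H_0}(B) = A$. Writing $H_0 R(z) = \Id + z R(z) = R(z) H_0$ (valid since $H_0$ commutes with its own resolvent on $\Df$), I would expand
\[
H_0 B - B H_0 = \frac{1}{2\pi}\oint_C \di z\,\Big[(\Id + z R(z))[\Pi_0,A]R(z) - R(z)[\Pi_0,A](\Id + z R(z))\Big].
\]
The two terms proportional to $z R(z)[\Pi_0,A]R(z)$ cancel, leaving only $[\Pi_0,A]R(z) - R(z)[\Pi_0,A]$ under the integral. Pulling the $z$-independent operator $[\Pi_0,A]$ out of the integral and invoking the Riesz formula in the form $\frac{1}{2\pi}\oint_C R(z)\,\di z = -\iu\,\Pi_0$ yields $H_0 B - B H_0 = -\iu\,\big[[\Pi_0,A],\Pi_0\big]$. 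Finally, using the identity $T\su{OD} = [[T,\Pi_0],\Pi_0]$ recalled earlier together with $[\Pi_0,A] = -[A,\Pi_0]$ gives $[[\Pi_0,A],\Pi_0] = -A\su{OD} = -A$, since $A$ is off-diagonal by hypothesis. Hence $\LI_{H_0}(B) = -\iu(H_0 B - B H_0) = A$, as desired.

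Off-diagonality and regularity come next. Because $\Pi_0$ commutes with $R(z)$, sandwiching the integrand by $\Pi_0$ on both sides produces the factor $\Pi_0[\Pi_0,A]\Pi_0 = 0$, and sandwiching by $\Pi_0^\perp$ produces $\Pi_0^\perp[\Pi_0,A]\Pi_0^\perp = 0$; thus $\Pi_0 B \Pi_0 = \Pi_0^\perp B \Pi_0^\perp = 0$, i.e.\ $B = B\su{OD}$. For the regularity, Proposition~\ref{prop:resolvent-Pi0 der Pi0 and H0}\ref{item:resolvent-Pi0} gives $R(z) \in \PO(\Hf,\Df)$, while $[\Pi_0,A] \in \PO(\Hf)$ since $\Pi_0 \in \PO(\Hf,\Df) \subset \PO(\Hf)$ and $A \in \PO(\Hf)$; the multiplication rules of the algebras then place the product $R(z)[\Pi_0,A]R(z)$ in $\PO(\Hf,\Df)$, and this class is preserved under the contour integral, so $B \in \PO(\Hf,\Df)$.

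For uniqueness, suppose $B_1, B_2 \in \PO(\Hf,\Df)$ are both off-diagonal solutions and set $B := B_1 - B_2$, which is off-diagonal and satisfies $[H_0,B] = 0$ on $\Df$ fiberwise. The key point is to promote this to commutation with $\Pi_0$: working at fixed $k$, one checks that $B(k)$ commutes with each resolvent $R_k(z)$ — apply $B(k)$ to the identity $(H_0(k)-z)R_k(z)\phi = \phi$ and use that $R_k(z)\phi \in \Df$, where $B(k)$ commutes with $H_0(k)$ — and hence with $\Pi_0(k) = \frac{\iu}{2\pi}\oint_C R_k(z)\,\di z$. An operator that is simultaneously off-diagonal with respect to $\Pi_0(k)$ and commuting with $\Pi_0(k)$ must vanish, since $B(k) = B(k)\su{OD} = [[B(k),\Pi_0(k)],\Pi_0(k)] = 0$; therefore $B = 0$. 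I expect the main obstacle to be precisely this domain bookkeeping in the uniqueness step: one must justify rigorously that a bounded fibered operator commuting with the unbounded $H_0$ only on $\Df$ indeed commutes with the spectral projection $\Pi_0$, which is exactly where the resolvent-commutation argument and the fiberwise Riesz representation are essential.
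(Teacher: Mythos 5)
Your proof is correct: the direct verification of $\LI_{H_0}(B)=A$ via the first resolvent manipulations and the Riesz formula with the right sign convention $\frac{1}{2\pi}\oint_C R(z)\,\di z = -\iu\,\Pi_0$, the off-diagonality check using $[\Pi_0,R(z)]=0$, the algebra bookkeeping placing $B$ in $\PO(\Hf,\Df)$, and the uniqueness step promoting $[H_0,B]=0$ on $\Df$ to commutation with $R_k(z)$ and hence with $\Pi_0(k)$ are all sound, including the domain care you flag. The paper itself gives no inline proof but defers to \cite[Subsection 6.2]{MaPaTe}, whose argument (tracing back to \cite{AvronSeilerYaffe}) is essentially the same direct verification you carry out, so your proposal matches the intended proof.
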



\bigskip \bigskip

{\footnotesize

\begin{tabular}{ll}
(G. Marcelli)
             &  \textsc{Mathematics Area, SISSA} \\ 
        	&   Via Bonomea 265, 34136 Trieste, Italy \\
        	&  {E-mail address}: \href{mailto:giovanna.marcelli@sissa.it}{\texttt{giovanna.marcelli@sissa.it}}\\[10pt]
(D.~Monaco) & \textsc{Dipartimento di Matematica, ``La Sapienza'' Universit\`{a} di Roma} \\
 &  Piazzale Aldo Moro 2, 00185 Rome, Italy \\
 &  {E-mail address}: \href{mailto:monaco@mat.uniroma1.it}{\texttt{monaco@mat.uniroma1.it}} \\
\end{tabular}
}

\end{document}